\newtheorem{lemma}{Lemma}[section]
\newtheorem{proposition}{Proposition}[section]
\newtheorem{theorem}{Theorem}[section]
\title{Rationality of Spectral Action for Robertson-Walker Metrics}
\author{ $ $\\ Farzad Fathizadeh, Asghar Ghorbanpour, Masoud Khalkhali}
\begin{document}

\maketitle

\begin{center} 

Department of Mathematics, Western University \\  
London, Ontario, Canada, N6A 5B7 \footnote{{\it E-mail addresses}:  
ffathiz@uwo.ca, aghorba@uwo.ca, masoud@uwo.ca}
 \\

\end{center}

\begin{abstract}

We use pseudodifferential calculus and heat kernel techniques to 
prove a conjecture by Chamseddine and Connes on rationality of the coefficients 
of the polynomials in the cosmic scale factor $a(t)$ and its higher derivatives, which 
describe the general terms $a_{2n}$ in the expansion of the spectral action for general 
Robertson-Walker metrics. We also compute the 
terms up to  $a_{12}$  in the expansion of the spectral action by our method. As a byproduct, we verify that    our 
computations agree with the terms up to $a_{10}$  that were 
previously computed by Chamseddine and Connes by a different method.

\end{abstract}

\vskip 0.1cm

\noindent
{\bf Mathematics Subject Classification (2010).} 81T75, 58B34, 58J42.

\vskip 0.1 cm

\noindent
{\bf Keywords.} Robertson-Walker metrics, Dirac operator, 
Spectral action, Heat kernel, Local invariants, Pseudodifferential calculus.

\tableofcontents

\section{Introduction}

Noncommutative geometry in the sense of Alain Connes \cite{ConBook} has provided a 
paradigm for geometry in the noncommutative setting based 
on spectral data. This generalizes Riemannian geometry \cite{ConReconstruct} and incorporates 
physical models of elementary particle physics 
\cite{ConGravity, ConMixing, ChaConMarGS, ConMarBook, ChaConConceptual, ChaConWhy, GraIocSch, Sit, Sui1, Sui2}. 
 An outstanding  feature of the spectral action defined for  
noncommutative geometries  
is that it derives the Lagrangian of the physical models from 
simple noncommutative geometric data \cite{ConMixing, ChaConSAP, ChaConMarGS}. 
Thus various methods have been developed 
for computing the terms in the expansion in the energy scale $\Lambda$ of the spectral action \cite{ChaConUFNCG, ChaConGravity, ChaConUncanny, ChaConRW, IocLevVasGlobal, IocLevVasTorsion}. 
Potential applications of noncommutative geometry in cosmology  have recently been 
carried out in \cite{KolMar, Mar, MarPie, MarPieTeh2012, MarPieTeh, NelOchSal, NelSak1, NelSak2, EstMar}.

Noncommutative geometric spaces are described by spectral triples $(\mathcal{A}, \mathcal{H}, D)$, 
where $\mathcal{A}$ is an involutive algebra represented by bounded operators on a Hilbert space 
$\mathcal{H}$, and $D$ is an unbounded self-adjoint operator acting in $\mathcal{H}$ \cite{ConBook}. 
The operator $D$, which plays the role of the Dirac operator,  encodes the metric information and it is further 
assumed that it has bounded commutators with elements of $\mathcal{A}$. It has been shown that 
if $\mathcal{A}$ is commutative and the triple satisfies suitable regularity conditions then $\mathcal{A}$ 
is the algebra of smooth functions on a spin$^c$ manifold $M$ and $D$ is the Dirac operator acting in 
the Hilbert space of $L^2$-spinors \cite{ConReconstruct}.  In this case, the Seeley-de Witt coefficients $a_{n}(D^2) = \int_M a_n (x, D^2) \,dv(x)$, which vanish for odd  $n$,  
appear in a small time asymptotic expansion of the form
\[
\textnormal{Tr}(e^{-t D^2}) \sim t^{- \textnormal{dim} (M)/2} \sum_{n\geq 0} a_{2n} (D^2) t^n  \qquad (t \to 0). 
\] 
These coefficients determine the terms in the expansion of the spectral action. That is, there is an expansion of the form
\[
\textnormal{Tr} f(D^2/\Lambda^2) \sim \sum_{n \geq 0} f_{2n}\, a_{2n} (D^2/\Lambda^2), 
\]
where $f$ is a positive even function defined on the real line, and  $f_{2n} $ are the moments of 
the function $f$ \cite{ChaConSAP, ChaConUFNCG}. See Theorem 1.145 in \cite{ConMarBook} for  
details in a more general setup, namely for spectral triples with simple dimension spectrum.

By devising a direct method based on the Euler-Maclaurin formula and the 
Feynman-Kac formula, Chamseddine and Connes have initiated in 
\cite{ChaConRW} a detailed study of the spectral action for the Robertson-Walker 
metric with a general cosmic scale factor $a(t)$. They calculated the terms up to $a_{10}$ in 
the expansion and checked the agreement of the terms up to  $a_6$ against Gilkey's 
universal formulas \cite{GilBook1, GilBook2}.

The present paper is intended  to compute the term $a_{12}$ in the spectral action 
for general Robertson-Walker metrics, and to prove the conjecture of Chamseddine and 
Connes \cite{ChaConRW} on rationality of the coefficients of the polynomials in $a(t)$ and its derivatives 
that  describe the general terms $a_{2n}$ in the expansion. 
In passing, we compare the outcome of our computations up to the term $a_{10}$ with the 
expressions obtained in \cite{ChaConRW}, and confirm their agreement.

 In terms of the above aims, explicit formulas for the Dirac operator of the Robertson-Walker metric and 
its pseudodifferential symbol in Hopf coordinates are derived in \S \ref{DiracinHopf}. Following a brief review 
of the heat kernel method for computing local invariants of elliptic differential operators using pseudodifferential 
calculus \cite{GilBook1}, we compute in \S \ref{Termsupto10} the terms up to $a_{10}$ in the expansion of the spectral action for Robertson-Walker 
metrics. The outcome of our calculations confirms the expressions obtained in \cite{ChaConRW}. This forms a check 
in particular on the validity of $a_8$ and $a_{10}$, which as suggested in \cite{ChaConRW} also, seems necessary due to the high complexity of the formulas.  In \S \ref{Term12}, we record the expression 
for the term $a_{12}$ achieved by a significantly heavier computation, compared to the previous terms. It is checked   that 
the reduction of $a_{12}$ to the round case $a(t)=\sin t $ conforms to the full expansion obtained 
in \cite{ChaConRW} for the round metric  by remarkable calculations that are based on the Euler-Maclaurin formula. 
In order to validate our expression for $a_{12}$,  parallel but completely different computations are performed in 
spherical coordinates and the final results are confirmed to match precisely with our calculations in Hopf coordinates.

In \S \ref{ProofofConjecture}, we prove the conjecture made in \cite{ChaConRW} on rationality of the coefficients appearing in the expressions for the terms of the spectral action for Robertson-Walker metrics.  That is, we show that 
the term $a_{2n}$ in the expansion is of the form $Q_{2n}\big(a(t),a'(t),\dots,a^{(2n)}(t)\big)/a(t)^{2n-3}$, 
where $Q_{2n}$ is a polynomial with rational coefficients. We also find a formula for the coefficient of the 
term with the highest derivate of $a(t)$ in $a_{2n}$. 
It is known that  values of Feynman integrals  for quantum  gauge theories are closely  related to  multiple zeta values and periods in general and hence tend to be  transcendental  numbers \cite{MarBook}. 
In sharp distinction, the rationality result proved in this paper is valid for all scale factors $a(t)$  in  Robertson-Walker metrics. Although it might be exceedingly difficult, it is certainly desirable to find all the terms $a_{2n}$ in the spectral action. The rationality result is a consequence of a certain symmetry in the heat kernel and it is plausible that this symmetry would eventually reveal the full structure of the coefficients $a_{2n}$.  This is a task for a future work. 
Our main conclusions are summarized in \S \ref{Conclusions}.

\section{The Dirac Operator for Robertson-Walker Metrics}\label{DiracinHopf}

According to the spectral action principle \cite{ConGravity, ChaConSAP}, the spectral action of any geometry depends on its Dirac operator 
since the terms in the expansion are determined by the high frequency 
behavior of the eigenvalues of this operator.  For spin manifolds,  
the explicit computation of the Dirac operator in a coordinate system 
is most efficiently 
achieved by writing its formula after lifting the Levi-Civita connection on 
the cotangent bundle to the spin connection on the spin bundle. In this section, we 
summarize this formalism  and compute 
the Dirac operator of the Robertson-Walker metric  in Hopf coordinates. Throughout this paper we use Einstein's summation convention without any further notice.

\subsection{Levi-Civita connection.}

The spin connection of any spin manifold $M$ is the lift of the 
Levi-Civita connection for the cotangent bundle $T^*M$ to the 
spin bundle. Let us, therefore, recall the following recipe for 
computing the Levi-Civita connection and thereby the spin 
connection of $M$. Given an orthonormal frame $\{\theta_\alpha\}$ for 
the tangent bundle $TM$ and its dual coframe $\{\theta^\alpha\}$, the 
connection 1-forms  $\omega^\alpha_\beta$ of any connection $\nabla$ on $T^*M$  
are defined by
\begin{equation} \nonumber
\nabla{\theta^\alpha}=\omega_\beta^\alpha \,\theta^\beta.
\end{equation}

Since the Levi-Civita connection is the unique torsion free connection 
which is compatible with the metric, its 1-forms are uniquely determined by 
\[
d\theta^\beta =\omega^\beta_\alpha \wedge \theta^\alpha. 
\]
This is justified by the fact that the compatibility with metric enforces the 
relations 
\begin{equation}\nonumber
\omega^\alpha_\beta=-\omega^\beta_\alpha, 
\end{equation}
while, taking advantage of the first Cartan structure equation, 
the torsion-freeness amounts to the vanishing of 
\begin{equation}\nonumber
T^\alpha = d\theta^\alpha - \omega^\alpha_\beta \wedge \theta^\beta.
\end{equation}

\subsection{The spin connection of Robertson-Walker metrics in Hopf coordinates. }

The (Euclidean) Robertson-Walker metric  with the cosmic scale factor $a(t)$ is given by  
\begin{equation} \nonumber
ds^{2}=dt^{2}+a^{2}\left(  t\right)   d\sigma^2,  
\end{equation} 
where $d\sigma^2$ is the round metric on the 3-sphere $\mathbb{S}^3$. It is customary to write this metric in spherical coordinates,  
however, for our purposes which will 
be explained below, it is more convenient  to use the Hopf coordinates, 
which parametrize the 3-sphere   $S^3\subset \mathbb{C}^2$ by
\begin{equation} \nonumber
z_1=e^{i\phi_1}\sin(\eta), \qquad 
z_2=e^{i\phi_2}\cos(\eta),
\end{equation}
with $\eta$ ranging in  $[0,\pi/2)$ and $\phi_1,\phi_2$ ranging in $ [0,2\pi)$.  The 
Robertson-Walker metric in the coordinate system $x=(t, \eta, \phi_1, \phi_2)$  is 
thus given by
\begin{equation} \nonumber
ds^{2}=dt^{2}+a^{2}\left(  t\right)\left(d\eta^2+\sin^2(\eta)d\phi_1^2+\cos^2(\eta)d\phi_2^2\right). 
\end{equation}
An orthonormal coframe for $ds^{2}$ is then provided by 
 \begin{eqnarray}
 \theta^1= dt, \qquad 
 \theta^2 =a(t)\, d \eta, \qquad 
 \theta^3 = a(t)\, \sin \eta \,d \phi_1, \qquad 
 \theta^4 = a(t)\, \cos \eta \, d \phi_2. \nonumber 
\end{eqnarray}
Applying the exterior derivative to these forms, one can easily show that they satisfy the 
following equations, which determine the connection 1-forms of the Levi-Civita connection: 
\begin{eqnarray}
&& d\theta^1= 0, \nonumber \\
&&  d\theta^2 =\frac{a'(t)}{a(t)}\, \theta^1\wedge \theta^2, \nonumber \\ 
&& d\theta^3 =\frac{a'(t)}{a(t)}\, \theta^1\wedge \theta^3+ \frac{\cot\eta}{a(t)}\, \theta^2\wedge \theta^3,  \nonumber \\
 && d\theta^4 =\frac{a'(t)}{a(t)}\, \theta^1\wedge \theta^4- \frac{\tan\eta}{a(t)}\, \theta^2\wedge \theta^4.  \nonumber 
\end{eqnarray}
We recast the above equations into the matrix of connection 1-forms  
\[\omega=\frac{1}{a(t)}\left(
\begin{array}{cccc}
 0 & - a'(t)\, \theta ^2 & - a'(t)\, \theta ^3 & - a'(t)\, \theta ^4 \\
  a'(t)\, \theta ^2 & 0 & -\cot \eta \,  \theta ^3 &  \tan \eta \, \theta ^4 \\
  a'(t)\, \theta ^3 & \cot \eta  \, \theta ^3 & 0 & 0 \\
  a'(t)\, \theta ^4 & - \tan\eta \, \theta ^4 & 0 & 0 \\
\end{array} 
\right) \in\mathfrak{so}(4), \]
which lifts to the spin bundle using the Lie algebra isomorphism 
$\mu:\mathfrak{so}(4)\to \mathfrak{spin}(4)$ given by 
(see \cite{LawMic})
\begin{equation} \nonumber
\mu(A)= \frac{1}{4}\sum_{\alpha,\beta}\langle A\theta^\alpha,\theta^\beta\rangle c(\theta^\alpha)c(\theta^\beta), \qquad A \in \mathfrak{so}(4).
\end{equation}
Since $\langle \omega\theta^\alpha,\theta^\beta\rangle =\omega^\alpha_\beta$, 
the lifted connection $\tilde{\omega}$ is written as 
\[
\tilde\omega=\frac{1}{4}\sum_{\alpha,\beta}\omega^\alpha_\beta c(\theta^\alpha)c(\theta^\beta). 
\]
In the case of the Robertson-Walker metric we find that  
\begin{equation} \label{exprspinconn}
\tilde\omega=\frac{1}{2a(t)}
\left( a'(t)\theta ^2 \gamma ^{12}+
a'(t)\theta ^3 \gamma ^{13}+
a'(t) \theta ^4\gamma ^{14}+
   \cot (\eta )  \theta ^3\gamma ^{23}-
   \tan (\eta )  \theta ^4\gamma ^{24}\right), 
\end{equation}
where we use the notation $\gamma^{i j} = \gamma^i \gamma^j$ for products of pairs of the gamma matrices 
$\gamma^1, \gamma^2, \gamma^3, \gamma^4$, which are respectively written as
\begin{small}
\[
\left(
\begin{array}{cccc}
 0 & 0 & i & 0 \\
 0 & 0 & 0 & i \\
 i & 0 & 0 & 0 \\
 0 & i & 0 & 0
\end{array}
\right), 
 \left(
\begin{array}{cccc}
 0 & 0 & 0 & 1 \\
 0 & 0 & 1 & 0 \\
 0 & -1 & 0 & 0 \\
 -1 & 0 & 0 & 0
\end{array}
\right),  
\left(
\begin{array}{cccc}
 0 & 0 & 0 & -i \\
 0 & 0 & i & 0 \\
 0 & i & 0 & 0 \\
 -i & 0 & 0 & 0
\end{array}
\right),
\left(
\begin{array}{cccc}
 0 & 0 & 1 & 0 \\
 0 & 0 & 0 & -1 \\
 -1 & 0 & 0 & 0 \\
 0 & 1 & 0 & 0
\end{array}
\right).
\]
\end{small}

\subsection{The Dirac Operator of Robertson-Walker metrics in Hopf coordinates.}

Using the expression  \eqref{exprspinconn} obtained for the spin connection and considering the predual of the 
orthonormal coframe $\{ \theta^\alpha \}$, 
\begin{eqnarray}
 \theta_1= \frac{\partial}{\partial t}, \qquad 
 \theta_2=\frac{1}{a(t)}\frac{\partial}{\partial \eta}, \qquad
\theta_3 = \frac{1}{a(t)\, \sin \eta} \frac{\partial}{\partial \phi_1}, \qquad
 \theta_4 = \frac{1}{a(t)\, \cos \eta} \frac{\partial}{\partial \phi_2}, \nonumber 
\end{eqnarray}
we compute the Dirac operator for the Robertson-Walker metric explicitly: 
\begin{align*}
D&=c(\theta^\alpha)\nabla_{\theta_\alpha} \\
&=\gamma^\alpha\left(\theta_\alpha+\tilde \omega(\theta_\alpha)\right)\\
&=\gamma^1\left(\frac{\partial}{\partial t}\right)+
\gamma^2\left(\frac{1}{a}\frac{\partial}{\partial \eta}+\frac{a'}{2a}\gamma^{12}\right)
+\gamma^3\left(\frac{1}{a\sin(\eta)}\frac{\partial}{\partial \phi_1}+\frac{a'}{2a}\gamma^{13}+\frac{\cot(\eta)}{2a}\gamma^{23}\right)\\
& \quad +\gamma^4\left(\frac{1}{a\cos(\eta)}\frac{\partial}{\partial \phi_2}+\frac{a'}{2a}\gamma^{14}-\frac{\tan(\eta)}{2a}\gamma^{24}\right)\\
&=\gamma^1 \frac{\partial}{\partial t}+\gamma^2 \frac{1}{a}\frac{\partial}{\partial \eta}+\gamma^3 \frac{1}{a\, \sin \eta} \frac{\partial}{\partial \phi_1}+\gamma^4 \frac{1}{a\, \cos \eta } \frac{\partial}{\partial \phi_2} +\frac{3a'}{2a}\gamma^1+\frac{\cot(2\eta)}{a}\gamma^2. 
\end{align*}
Thus the pseudodifferential symbol of $D$ is given by
\begin{eqnarray} \nonumber
\sigma_D({ x,\xi}) = i\xi_1\gamma^1+ \frac{i\xi_2}{a}\gamma^2+ \frac{i\xi_3}{a\, \sin \eta}\gamma^3+ \frac{i\xi_4}{a\, \cos \eta } \gamma^4 +\frac{3a'}{2a}\gamma^1+\frac{\cot(2\eta)}{a}\gamma^2. 
\end{eqnarray}

For the purpose of employing pseudodifferential calculus in the 
sequel to compute the heat coefficients, we record in the following proposition the 
pseudodifferential symbol of $D^2$. This can be achieved by 
a straightforward computation to find an explicit expression 
for $D^2$, or alternatively, one can apply the composition 
rule for symbols, 
$\sigma_{P_1 P_2}({ x,\xi})=\sum_\alpha \frac{(-i)^{|\alpha|}}{\alpha !}\partial^\alpha_\xi\sigma_{P_1}\partial^\alpha_{ x}\sigma_{P_2}$,  to the symbol of $D$.

\begin{proposition}
The pseudodifferential symbol of $D^2$, where $D$ is the Dirac operator for the 
Robertson-Walker metric, is given by 
\[
\sigma(D^2)= p_2 + p_1 + p_0, 
\]
where the homogeneous components $p_i$ of order $i$ are written as 
\begin{eqnarray}\label{symbolHopf}
p_2&=&\xi _1^2+\frac{1}{a^2}\xi _2^2+\frac{1}{a^2  \sin ^2(\eta )}\xi _3^2+\frac{ 1}{a^2\cos ^2(\eta)}\xi _4^2, \nonumber \\ 
 p_1&=& \frac{-3 i a a'}{a^2}\xi _1+\frac{-i  a' \gamma ^{12}-2i \cot (2\eta )}{a^2}\xi _2 
 -\frac{i  a' \csc (\eta ) \gamma ^{13}+i \cot (\eta ) \csc (\eta ) \gamma ^{23}}{a^2}\xi _3 \nonumber \\ &&
 +\frac{i \tan (\eta ) \sec (\eta ) \gamma ^{24}-i a' \sec (\eta ) \gamma ^{14} }{a^2}\xi _4 , \nonumber \\
p_0&=&\frac{1}{4 a(t)^2}\Big(-6 a(t) a''(t)-3 a'(t)^2+\csc ^2(\eta )+\sec ^2(\eta ) \nonumber \\ 
&&+4+2  a'(t) (\cot (\eta )-\tan (\eta ))\gamma ^{12}\Big). 
\end{eqnarray}
\end{proposition}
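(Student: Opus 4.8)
The plan is to obtain $\sigma(D^2)$ directly from the explicit first-order form of $D$ derived just above, namely
\[
D=\gamma^1 \frac{\partial}{\partial t}+\gamma^2 \frac{1}{a}\frac{\partial}{\partial \eta}+\gamma^3 \frac{1}{a\,\sin\eta}\frac{\partial}{\partial\phi_1}+\gamma^4\frac{1}{a\,\cos\eta}\frac{\partial}{\partial\phi_2}+\frac{3a'}{2a}\gamma^1+\frac{\cot(2\eta)}{a}\gamma^2,
\]
with symbol $\sigma_D=\sum_\alpha ic_\alpha(x)\xi_\alpha\gamma^\alpha+e(x)$ where $c_1=1,\ c_2=1/a,\ c_3=1/(a\sin\eta),\ c_4=1/(a\cos\eta)$ and $e=\tfrac{3a'}{2a}\gamma^1+\tfrac{\cot(2\eta)}{a}\gamma^2$. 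I would then apply the composition formula $\sigma_{D^2}=\sum_\alpha \frac{(-i)^{|\alpha|}}{\alpha!}\partial^\alpha_\xi\sigma_D\,\partial^\alpha_x\sigma_D$, which since $D$ is first order truncates after $|\alpha|\le 1$: the $|\alpha|=0$ term is $\sigma_D^2$, and the $|\alpha|=1$ term is $-i\sum_\beta\partial_{\xi_\beta}\sigma_D\,\partial_{x_\beta}\sigma_D$. Collecting by homogeneity degree in $\xi$ yields $p_2$, $p_1$, $p_0$.

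The degree-2 part is immediate: only $\sigma_D^2$ contributes at order $2$, and since the $\gamma^\alpha$ satisfy $\gamma^\alpha\gamma^\beta+\gamma^\beta\gamma^\alpha=-2\delta^{\alpha\beta}$ (as one reads off from the stated matrices, which square to $-I$ and anticommute), the cross terms $\gamma^\alpha\gamma^\beta\xi_\alpha\xi_\beta$ with $\alpha\neq\beta$ cancel in pairs and one is left with $p_2=\sum_\alpha c_\alpha(x)^2\xi_\alpha^2$, i.e. the principal symbol of the Laplace-type operator associated to the metric, matching \eqref{symbolHopf}. For the degree-1 part I would extract the pieces of $\sigma_D^2$ that are linear in $\xi$, namely $\sum_{\alpha,\beta}(ic_\alpha\xi_\alpha\gamma^\alpha e+e\,ic_\beta\xi_\beta\gamma^\beta)$, together with the degree-1 piece of the first-order correction term $-i\sum_\beta\partial_{\xi_\beta}\sigma_D\,\partial_{x_\beta}\sigma_D=-i\sum_\beta(ic_\beta\gamma^\beta)\big(\sum_\gamma i\,\partial_{x_\beta}c_\gamma\,\xi_\gamma\gamma^\gamma+\partial_{x_\beta}e\big)$; the $\xi$-linear part of this comes from $\partial_{x_\beta}c_\gamma$. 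One then simplifies using $\gamma^\beta\gamma^\gamma$ relations and the identity $\cot\eta+\tan\eta=2/\sin(2\eta)$, etc., to land on the stated $p_1$; in particular the coefficient of $\xi_1$ reduces to $-3ia'/a$ since only $\gamma^1 e+e\gamma^1$ and the $\partial_t$-derivative of $c_1$ (which vanishes) contribute, and the $\csc$, $\sec$ factors in the $\xi_2,\xi_3,\xi_4$ coefficients arise from $\partial_\eta c_\alpha$.

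The degree-0 part is the genuine work: it collects $e(x)^2$ from $\sigma_D^2$ and the $\xi$-independent remainder $-i\sum_\beta(ic_\beta\gamma^\beta)\,\partial_{x_\beta}e$ of the correction term. One computes $e^2=\big(\tfrac{3a'}{2a}\big)^2(\gamma^1)^2+\big(\tfrac{\cot(2\eta)}{a}\big)^2(\gamma^2)^2+\tfrac{3a'\cot(2\eta)}{2a^2}(\gamma^1\gamma^2+\gamma^2\gamma^1)=-\tfrac{9a'^2}{4a^2}-\tfrac{\cot^2(2\eta)}{a^2}$ using anticommutativity, then adds $\sum_\beta c_\beta\gamma^\beta\partial_{x_\beta}e$: the $\partial_t$ term produces $\gamma^1\big(\tfrac{3a''}{2a}-\tfrac{3a'^2}{2a^2}\big)(\gamma^1)$ contributing $-\tfrac{3a''}{2a}+\tfrac{3a'^2}{2a^2}$, and the $\partial_\eta$ term produces $\tfrac{1}{a}\gamma^2\big(\tfrac{3a'}{2a}\partial_\eta\!\big)(\ldots)$... more precisely $\gamma^2\cdot\tfrac1a\cdot\partial_\eta\!\big(\tfrac{\cot(2\eta)}{a}\big)(\gamma^2)$ giving $\tfrac{2\csc^2(2\eta)}{a^2}$, plus a term $\gamma^2\cdot\tfrac1a\cdot\tfrac{3a'}{2a}\partial_\eta(1)\gamma^1$-type piece that instead yields the off-diagonal $\gamma^{12}$ contribution. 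The main obstacle I anticipate is bookkeeping: correctly tracking which products of gamma matrices survive, keeping the $-i\cdot i=1$ signs straight in the correction term, and consolidating the trigonometric identities (e.g. $\cot^2(2\eta)$ versus $\tfrac14(\csc^2\eta+\sec^2\eta)$ and the relation $\csc^2(2\eta)=\tfrac14\csc^2\eta\sec^2\eta$, together with $\cot\eta-\tan\eta=2\cot(2\eta)$) so that the scalar part collapses to $\tfrac{1}{4a^2}\big(-6aa''-3a'^2+\csc^2\eta+\sec^2\eta+4\big)$ and the matrix part to $\tfrac{1}{2a^2}a'(\cot\eta-\tan\eta)\gamma^{12}$, exactly as in \eqref{symbolHopf}. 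As a consistency check I would verify the $\gamma^{12}$ coefficient independently and confirm that setting $a(t)=\sin t$ reproduces the known symbol of $D^2$ on the round $S^4$.
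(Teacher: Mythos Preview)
Your approach is exactly the one the paper indicates: apply the symbol composition rule $\sigma_{D^2}=\sigma_D^2-i\sum_\beta\partial_{\xi_\beta}\sigma_D\,\partial_{x_\beta}\sigma_D$ (which truncates at $|\alpha|\le1$ since $D$ is first order) and sort by homogeneity in $\xi$. The paper itself offers no more detail than ``straightforward computation \ldots\ or apply the composition rule,'' so your outline is in fact more explicit than the paper's own proof.

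One small bookkeeping correction in your $p_0$ sketch: the off-diagonal $\gamma^{12}$ piece does \emph{not} arise from a $\partial_\eta$-term. Since $\tfrac{3a'}{2a}$ is independent of $\eta$, the expression you wrote as ``$\gamma^2\cdot\tfrac1a\cdot\tfrac{3a'}{2a}\partial_\eta(1)\gamma^1$'' vanishes. The $\gamma^{12}$ contribution comes instead from the $\partial_t$-part you already invoked, namely $\gamma^1\,\partial_t\!\big(\tfrac{\cot(2\eta)}{a}\big)\gamma^2=-\tfrac{a'\cot(2\eta)}{a^2}\gamma^{12}$; together with the identity $\cot\eta-\tan\eta=2\cot(2\eta)$ this yields the stated matrix part (up to matching the sign with the paper's convention). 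Once you track that term correctly, your computation of $e^2$ and of $\sum_\beta c_\beta\gamma^\beta\partial_{x_\beta}e$, combined with $4\csc^2(2\eta)=\csc^2\eta+\sec^2\eta$ and $\csc^2-\cot^2=1$, collapses to the scalar part $\tfrac{1}{4a^2}\big(-6aa''-3a'^2+\csc^2\eta+\sec^2\eta+4\big)$ as claimed.
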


\section{Terms up to $a_{10}$ and their Agreement with Chamseddine-Connes' Result} \label{Termsupto10}

The computation of the terms in the expansion of the spectral action for a spin manifold, or 
equivalently the calculation of the heat coefficients, can be 
achieved by recursive formulas while working in the heat kernel scheme of local invariants 
of elliptic differential operators and index theory \cite{GilBook1}.  Pseudodifferential calculus 
is an effective tool for dealing with the necessary approximations for deriving the small time 
asymptotic expansions in which the heat coefficients appear. Universal formulas in terms of the Riemann curvature operator and its contractions and covariant 
derivatives are written in the literature only  for the terms up to $a_{10}$, namely Gilkey's formulas up to $a_6$
 \cite{GilBook1, GilBook2} and the formulas in \cite{AmsBerOc, Avr, Van} for $a_8$ and $a_{10}$.

\subsection{Small time heat kernel expansions using pseudodifferential calculus.}\label{heatcoefsbypesudo}

In \cite{GilBook1}, by appealing to the Cauchy integral formula and using pseudodifferential calculus, 
recursive formulas for the heat coefficients of elliptic differential operators are derived. That is, one 
writes  \footnote{Hereafter in this paper $t$ denotes the first variable of the space when it appears in $a(t)$ and its derivatives 
and it denotes the time when it appears in the heat operator and the associated small time asymptotic expansions.}
\[
e^{-tD^2}=-\frac{1}{2\pi i}\int_\gamma e^{-t\lambda}(D^2-\lambda)^{-1}d\lambda,
\]
where the contour $\gamma$ goes around the non-negative real axis in the counterclockwise 
direction, and one uses pseudodifferential calculus to approximate $(D^2-\lambda)^{-1}$ via the homogeneous 
terms appearing in the expansion of the symbol of the parametrix of $D^2-\lambda$.  
  Although left and  
right parametrices have the same homogeneous components, for the purpose of finding recursive formulas 
for the coefficients appearing in each component, which will be explained shortly, it is more convenient for 
us to consider the right parametrix $\tilde{R}(\lambda)$. Therefore, the next task is to  compute recursively the 
homogeneous pseudodifferential symbols $r_j$ of order $-2-j$ in the expansion of  $\sigma(\tilde{R}(\lambda))$. 
Using the calculus of symbols, with the crucial nuance that $\lambda$ is considered to be of order 2, one finds that 
\[
r_0=(p_2-\lambda)^{-1}, 
\]
and for any $n>1$
\begin{align}\label{recursive1}
r_n=-r_0\sum_{\begin{array}{c}|\alpha|+j+2-k=n\\ j<n\end{array}} \frac{(-i)^{|\alpha|}}{\alpha!}d^\alpha_\xi p_k\,d_x^\alpha r_j.
\end{align}
 
We summarize the process of obtaining the heat coefficients by explaining that one then uses these homogeneous 
terms in the Cauchy integral formula to approximate the integral kernel of $e^{-t D^2}.$ Integration of  the kernel of  this operator on the diagonal
 yields a small time asymptotic expansion of the form 
 \begin{equation}\nonumber
{\rm Tr}(e^{-tD^2})\sim \sum_{n=0}^\infty \frac{t^{(n-4)/2}}{16\pi^4}\int {\rm tr}(e_n(x)) \,dvol_g \qquad (t\to 0),  
\end{equation}
where 
\begin{equation} \label{engeneralform}
e_n(x) \sqrt{\det g}=\frac{-1}{2\pi i}\int \int_\gamma e^{-\lambda}r_n(x,\xi,\lambda)\,d\lambda \,d\xi.
\end{equation}
For detailed discussions, we refer the reader to \cite{GilBook1}.

It is clear from \eqref{symbolHopf}  that cross derivatives 
of $p_2$ vanish and  $d_\xi^\alpha p_k=0$ if $|\alpha|>k$. Furthermore, 
$\frac{\partial}{\partial\phi_k}r_n=0$ for $n \geq 0$, and  the summation  \eqref{recursive1} 
is written as 
\begin{eqnarray}\label{rnshort}
r_n&=&-r_0\,p_0\,r_{n-2} -r_0\,p_1\,r_{n-1}
+ir_0\frac{\partial}{\partial\xi_1}p_1 \frac{\partial}{\partial t }r_{n-2}
+ir_0\frac{\partial}{\partial\xi_2}p_1 \frac{\partial}{\partial \eta}r_{n-2}  \nonumber \\
&&+ir_0\frac{\partial}{\partial\xi_1}p_2 \frac{\partial}{\partial t }r_{n-1} +ir_0\frac{\partial}{\partial\xi_2}p_2 \frac{\partial}{\partial \eta }r_{n-1}
+\frac{1}{2}r_0\frac{\partial^2}{\partial\xi_1^2}p_2 \frac{\partial^2}{\partial t^2 }r_{n-2} \nonumber \\
&&+\frac{1}{2}r_0\frac{\partial^2}{\partial\xi_2^2}p_2 \frac{\partial^2}{\partial \eta^2 }r_{n-2}. 
\end{eqnarray}

Using induction, we find that
\begin{equation}\label{rnja}
r_n=\sum_{ \begin{array}{c} 2j-2-|\alpha|=n \\ n/2+1 \leq j \leq 2n+1\end{array}}r_{n,j,\alpha }(x)\, r_0^j \,\xi^\alpha. 
\end{equation}
For example, one can see that for $n=0$ the only non-zero $r_{0,j,\alpha}$ is $r_{0,1,\bf{0}}=1$, and 
for $n=1$ the non-vanishing terms are  
$$r_{1,2,{\bf e}_k}=\frac{\partial p_1}{\partial \xi_k}, \qquad r_{1,3,2{\bf e}_l+{\bf e}_k}=-2ig^{kk}\frac{\partial g^{ll}}{\partial x_k},$$
where ${\bf e}_j$ denotes the $j$-th standard unit vector in $\mathbb{R}^4$.

It then follows from the equations  \eqref{engeneralform}, \eqref{rnshort} and \eqref{rnja} that 
\begin{align}\label{en}
e_n(x) \,a(t)^{3}\sin(\eta)\cos(\eta)&=\frac{-1}{2\pi i}\int_{\mathbb{R}^4}\int_\gamma e^{-t\lambda} r_n(x,\xi,\lambda)\, d\lambda \,d\xi\nonumber\\
 &=\sum r_{n,j,\alpha}(x)\int_{\mathbb{R}^4}\xi^\alpha \frac{-1}{2\pi i}\int_\gamma e^{-t\lambda}r_0^j\,d\lambda \,d\xi\\
  &=\sum \frac{c_\alpha}{(j-1)!} r_{n,j,\alpha} \,a(t)^{\alpha_2+\alpha_3+\alpha_4+3}\sin(\eta)^{\alpha_3+1}\cos(\eta)^{\alpha_4+1},\nonumber
 \end{align}
where 
$$c_\alpha=\prod_k \Gamma\left(\frac{\alpha_k+1}{2}\right)\frac{(-1)^{\alpha_k}+1}{2}.$$
It is straightforward to justify the latter using these identities:
\begin{eqnarray}
\frac{1}{2\pi i}\int_\gamma e^{-\lambda}r_0^jd\lambda&=&(-1)^{j}\frac{(-1)^{j-1}}{(j-1)!}e^{-||\xi||^2}=\frac{-1}{(j-1)!}\prod_{k=1}^4 e^{-g^{kk}\xi_k^2}, \nonumber \\ 
\int_\mathbb{R}x^ne^{-bx^2}dx&=&\frac{1}{2} \left((-1)^n+1\right) b^{-\frac{n}{2}-\frac{1}{2}} \Gamma \left(\frac{n+1}{2}\right).\nonumber
\end{eqnarray}

A key point that facilitates our calculations and the proof of our main theorem presented 
in \S \ref{proofofrationality} is the derivation of recursive formulas for the coefficients $r_{n, j, \alpha}$ as follows.  
By substitution of \eqref{rnja} into \eqref{rnshort} we find a recursive formula of the form
\begin{align}\label{rnjarec}
r_{n,j,\alpha}&= -p_0r_{n-2,j-1,\alpha}-\sum_{k}\frac{\partial p_1}{\partial\xi_k} r_{n-1,j-1,\alpha-{\bf e}_k}\nonumber\\
&\qquad +i\sum_{k}\frac{\partial p_1}{\partial \xi_k}\frac{\partial}{\partial x_k}r_{n-2,j-1,\alpha}+i(2-j)\sum_{k,l}\frac{\partial g^{ll}}{\partial x_k}\frac{\partial p_1}{\partial \xi_k}r_{n-2,j-2,\alpha-2{\bf e}_l}\nonumber\\
&\qquad+2i\sum_k g^{kk} \frac{\partial}{\partial x_k}r_{n-1,j-1,\alpha-{\bf e}_k}+i (4-2j)\sum_{k,l}g^{kk}\frac{\partial g^{ll}}{\partial x_k} r_{n-1,j-2,\alpha-2{\bf e}_l-{\bf e}_k}\\
&\qquad +\sum_k g^{kk} \frac{\partial^2}{\partial x_k^2}r_{n-2,j-1,\alpha}+(4-2j)\sum_{k,l}g^{kk}\frac{\partial g^{ll}}{\partial x_k} \frac{\partial}{\partial x_k}r_{n-2,j-2,\alpha-2{\bf e}_l}\nonumber\\
&\qquad +(2-j)\sum_{k,l}g^{kk} \frac{\partial^2 g^{ll}}{\partial x_k^2} r_{n-2,j-2,\alpha-2{\bf e}_l}\nonumber \\
 &\qquad +(3-j)(2-j)\sum_{k,l,l'}g^{kk} \frac{\partial g^{ll}}{\partial x_k}\frac{\partial g^{l'l'}}{\partial x_k} r_{n-2,j-3,\alpha-2{\bf e}_l-2{\bf e}_{l'}}.\nonumber
\end{align}

It is undeniable that the mechanism described above  for computing the 
heat coefficients involves heavy computations which need to be overcome by computer 
programming. Calculating explicitly the functions $e_n(x)$, $n=0, 2, \dots, 12$, and computing 
their integrals over $\mathbb{S}_a^3$ with computer assistance, we find the explicit polynomials in $a(t)$ and its 
derivatives recorded in the sequel, which describe the corresponding  terms in the expansion of the spectral action for the Robertson-Walker metric. That is, each function $a_n$ recorded below is the outcome of 
\begin{eqnarray}
a_n &=&\frac{1}{16\pi^4}\int_{\mathbb{S}_a^3}{\rm tr}(e_n) \,dvol_g \nonumber \\
&=&\frac{1}{16\pi^4}\int_0^{2\pi}\int_0^{2\pi}\int_0^{\pi/2}{\rm tr}(e_n) \, a(t)^{3}\sin(\eta)\cos(\eta) \,d\eta \,d\phi_1 
\,d\phi_2. \nonumber
\end{eqnarray}

\subsection{The terms up to $a_{6}$}

These terms were computed in \cite{ChaConRW} by their direct method, which is 
based on the Euler-Maclaurin summation formula and the Feynman-Kac formula, and they were checked by 
Gilkey's universal formulas. Our computations based on the method explained in the previous 
subsection also gives the same result.

The  first term, whose integral up to a universal factor gives the volume, is given by
\[
a_0=\frac{a(t)^3}{2}.  
\]
Since the latter appears as the leading term in the small time asymptotic expansion of the heat kernel 
it is related to Weyl's law, which reads the volume from the asymptotic distribution of the eigenvalues of 
$D^2$. The next term, which is related to the scalar curvature, has the expression 
\[
a_2= \frac{1}{4} a(t) \left(a(t) a''(t)+a'(t)^2-1\right).
\]
The term after, whose integral is topological, is related to the Gauss-Bonnet term  (cf. \cite{ChaConRW}) 
and is written as 
\[
a_4=\frac{1}{120} \Big(3 a^{(4)}(t) a(t)^2+3 a(t) a''(t)^2-5 a''(t)+9 a^{(3)}(t) a(t) a'(t)-4 a'(t)^2 a''(t)\Big). 
\]
The term $a_6$, which is the last term for which Gilkey's universal formulas are written, is given by 
\begin{flushleft}
$a_6=\frac{1}{5040 a(t)^2}\Big(9 a^{(6)}(t) a(t)^4-21 a^{(4)}(t) a(t)^2-3 a^{(3)}(t)^2 a(t)^3-56 a(t)^2 a''(t)^3+42 a(t) a''(t)^2+36 a^{(5)}(t) a(t)^3 a'(t)+6 a^{(4)}(t) a(t)^3 a''(t)-42 a^{(4)}(t) a(t)^2 a'(t)^2+60 a^{(3)}(t) a(t) a'(t)^3+21 a^{(3)}(t) a(t) a'(t)+240 a(t) a'(t)^2 a''(t)^2-60 a'(t)^4 a''(t)-21 a'(t)^2 a''(t)-252 a^{(3)}(t) a(t)^2 a'(t) a''(t)\Big).$ \\
\end{flushleft}

\subsection{The terms $a_8$ and $a_{10}$}

These terms were computed by Chamseddine and Connes in \cite{ChaConRW} using 
their direct method. In order to form a check on the final formulas, they have suggested  
to use the universal formulas of \cite{AmsBerOc, Avr, Van} to calculate these terms and 
compare the results. As mentioned earlier, Gilkey's universal formulas were used in \cite{ChaConRW} 
to check the terms up to $a_6$, however, they are written in the literature only up to $a_6$ and become 
rather complicated even for this term.

In this subsection, we pursue the computation of the terms $a_8$ and $a_{10}$ in the expansion of the 
spectral action for Robertson-Walker metrics by continuing 
to employ pseudodifferential calculus, as presented in \S \ref{heatcoefsbypesudo}, and check that the final 
formulas agree with the result in \cite{ChaConRW}.  The final formulas for $a_8$ and $a_{10}$ are the following expressions:

\[a_8=\]
\begin{flushleft}
\begin{small}
$-\frac{1}{10080 a(t)^4}\Big(-a^{(8)}(t) a(t)^6+3 a^{(6)}(t) a(t)^4+13 a^{(4)}(t)^2 a(t)^5-24 a^{(3)}(t)^2 a(t)^3-114 a(t)^3 a''(t)^4+43 a(t)^2 a''(t)^3-5 a^{(7)}(t) a(t)^5 a'(t)+2 a^{(6)}(t) a(t)^5 a''(t)+9 a^{(6)}(t) a(t)^4 a'(t)^2+16 a^{(3)}(t) a^{(5)}(t) a(t)^5-24 a^{(5)}(t) a(t)^3 a'(t)^3-6 a^{(5)}(t) a(t)^3 a'(t)+69 a^{(4)}(t) a(t)^4 a''(t)^2-36 a^{(4)}(t) a(t)^3 a''(t)+60 a^{(4)}(t) a(t)^2 a'(t)^4+15 a^{(4)}(t) a(t)^2 a'(t)^2+90 a^{(3)}(t)^2 a(t)^4 a''(t)-216 a^{(3)}(t)^2 a(t)^3 a'(t)^2-108 a^{(3)}(t) a(t) a'(t)^5-27 a^{(3)}(t) a(t) a'(t)^3+801 a(t)^2 a'(t)^2 a''(t)^3-588 a(t) a'(t)^4 a''(t)^2-87 a(t) a'(t)^2 a''(t)^2+108 a'(t)^6 a''(t)+27 a'(t)^4 a''(t)+78 a^{(5)}(t) a(t)^4 a'(t) a''(t)+132 a^{(3)}(t) a^{(4)}(t) a(t)^4 a'(t)-312 a^{(4)}(t) a(t)^3 a'(t)^2 a''(t)-819 a^{(3)}(t) a(t)^3 a'(t) a''(t)^2+768 a^{(3)}(t) a(t)^2 a'(t)^3 a''(t)+102 a^{(3)}(t) a(t)^2 a'(t) a''(t)\Big),$
\end{small} \\ 
\end{flushleft}
and
\[a_{10}=\]
\begin{flushleft}
\begin{small}
$\frac{1}{665280 a(t)^6}\Big(3 a^{(10)}(t) a(t)^8-222 a^{(5)}(t)^2 a(t)^7-348 a^{(4)}(t) a^{(6)}(t) a(t)^7-147 a^{(3)}(t) a^{(7)}(t) a(t)^7-18 a''(t) a^{(8)}(t) a(t)^7+18 a'(t) a^{(9)}(t) a(t)^7-482 a''(t) a^{(4)}(t)^2 a(t)^6-331 a^{(3)}(t)^2 a^{(4)}(t) a(t)^6-1110 a''(t) a^{(3)}(t) a^{(5)}(t) a(t)^6-1556 a'(t) a^{(4)}(t) a^{(5)}(t) a(t)^6-448 a''(t)^2 a^{(6)}(t) a(t)^6-1074 a'(t) a^{(3)}(t) a^{(6)}(t) a(t)^6-476 a'(t) a''(t) a^{(7)}(t) a(t)^6-43 a'(t)^2 a^{(8)}(t) a(t)^6-11 a^{(8)}(t) a(t)^6+8943 a'(t) a^{(3)}(t)^3 a(t)^5+21846 a''(t)^2 a^{(3)}(t)^2 a(t)^5+4092 a'(t)^2 a^{(4)}(t)^2 a(t)^5+396 a^{(4)}(t)^2 a(t)^5+10560 a''(t)^3 a^{(4)}(t) a(t)^5+39402 a'(t) a''(t) a^{(3)}(t) a^{(4)}(t) a(t)^5+11352 a'(t) a''(t)^2 a^{(5)}(t) a(t)^5+6336 a'(t)^2 a^{(3)}(t) a^{(5)}(t) a(t)^5+594 a^{(3)}(t) a^{(5)}(t) a(t)^5+2904 a'(t)^2 a''(t) a^{(6)}(t) a(t)^5+264 a''(t) a^{(6)}(t) a(t)^5+165 a'(t)^3 a^{(7)}(t) a(t)^5+33 a'(t) a^{(7)}(t) a(t)^5-10338 a''(t)^5 a(t)^4-95919 a'(t)^2 a''(t) a^{(3)}(t)^2 a(t)^4-3729 a''(t) a^{(3)}(t)^2 a(t)^4-117600 a'(t) a''(t)^3 a^{(3)}(t) a(t)^4-68664 a'(t)^2 a''(t)^2 a^{(4)}(t) a(t)^4-2772 a''(t)^2 a^{(4)}(t) a(t)^4-23976 a'(t)^3 a^{(3)}(t) a^{(4)}(t) a(t)^4-2640 a'(t) a^{(3)}(t) a^{(4)}(t) a(t)^4-12762 a'(t)^3 a''(t) a^{(5)}(t) a(t)^4-1386 a'(t) a''(t) a^{(5)}(t) a(t)^4-651 a'(t)^4 a^{(6)}(t) a(t)^4-132 a'(t)^2 a^{(6)}(t) a(t)^4+111378 a'(t)^2 a''(t)^4 a(t)^3+2354 a''(t)^4 a(t)^3+31344 a'(t)^4 a^{(3)}(t)^2 a(t)^3+3729 a'(t)^2 a^{(3)}(t)^2 a(t)^3+236706 a'(t)^3 a''(t)^2 a^{(3)}(t) a(t)^3+13926 a'(t) a''(t)^2 a^{(3)}(t) a(t)^3+43320 a'(t)^4 a''(t) a^{(4)}(t) a(t)^3+5214 a'(t)^2 a''(t) a^{(4)}(t) a(t)^3+2238 a'(t)^5 a^{(5)}(t) a(t)^3+462 a'(t)^3 a^{(5)}(t) a(t)^3-162162 a'(t)^4 a''(t)^3 a(t)^2-11880 a'(t)^2 a''(t)^3 a(t)^2-103884 a'(t)^5 a''(t) a^{(3)}(t) a(t)^2-13332 a'(t)^3 a''(t) a^{(3)}(t) a(t)^2-6138 a'(t)^6 a^{(4)}(t) a(t)^2-1287 a'(t)^4 a^{(4)}(t) a(t)^2+76440 a'(t)^6 a''(t)^2 a(t)+10428 a'(t)^4 a''(t)^2 a(t)+11700 a'(t)^7 a^{(3)}(t) a(t)+2475 a'(t)^5 a^{(3)}(t) a(t)-11700 a'(t)^8 a''(t)-2475 a'(t)^6 a''(t)\Big).$
\end{small}
\end{flushleft}

\section{Computation of the Term $a_{12}$ in the Expansion of the Spectral Action}\label{Term12}

We pursue the computation of the term $a_{12}$ in the expansion of the spectral action 
for Robertson-Walker metrics by employing pseudodifferential calculus to find 
the term $r_{12}$ for the parametrix of $\lambda - D^2$, which is homogeneous of order $-14$, 
and by performing the appropriate integrations. Since there is no universal formula in the 
literature for this term, we have performed two heavy computations, one in Hopf coordinates and the 
other in spherical coordinates, to form a check on the validity of the outcome of our calculations. 
Another efficient way of computing the term $a_{12}$ is to use the direct method of \cite{ChaConRW}.

\subsection{The result of the computation in Hopf coordinates.} \label{exprfora12} Continuing the recursive procedure 
commenced in the previous section and exploiting computer assistance, while the calculation 
becomes significantly heavier for the term $a_{12}$, we find the following expression: 
\[a_{12}=\]
\begin{flushleft}
\begin{small}
$\frac{1}{17297280 a(t)^{8}}\Big(3 a^{(12)}(t) a(t)^{10}-1057 a^{(6)}(t)^2 a(t)^9-1747 a^{(5)}(t) a^{(7)}(t) a(t)^9-970 a^{(4)}(t) a^{(8)}(t) a(t)^9-317 a^{(3)}(t) a^{(9)}(t) a(t)^9-34 a''(t) a^{(10)}(t) a(t)^9+21 a'(t) a^{(11)}(t) a(t)^9+5001 a^{(4)}(t)^3 a(t)^8+2419 a''(t) a^{(5)}(t)^2 a(t)^8+19174 a^{(3)}(t) a^{(4)}(t) a^{(5)}(t) a(t)^8+4086 a^{(3)}(t)^2 a^{(6)}(t) a(t)^8+2970 a''(t) a^{(4)}(t) a^{(6)}(t) a(t)^8-5520 a'(t) a^{(5)}(t) a^{(6)}(t) a(t)^8-511 a''(t) a^{(3)}(t) a^{(7)}(t) a(t)^8-4175 a'(t) a^{(4)}(t) a^{(7)}(t) a(t)^8-745 a''(t)^2 a^{(8)}(t) a(t)^8-2289 a'(t) a^{(3)}(t) a^{(8)}(t) a(t)^8-828 a'(t) a''(t) a^{(9)}(t) a(t)^8-62 a'(t)^2 a^{(10)}(t) a(t)^8-13 a^{(10)}(t) a(t)^8+45480 a^{(3)}(t)^4 a(t)^7+152962 a''(t)^2 a^{(4)}(t)^2 a(t)^7+203971 a'(t) a^{(3)}(t) a^{(4)}(t)^2 a(t)^7+21369 a'(t)^2 a^{(5)}(t)^2 a(t)^7+1885 a^{(5)}(t)^2 a(t)^7+410230 a''(t) a^{(3)}(t)^2 a^{(4)}(t) a(t)^7+163832 a'(t) a^{(3)}(t)^2 a^{(5)}(t) a(t)^7+250584 a''(t)^2 a^{(3)}(t) a^{(5)}(t) a(t)^7+244006 a'(t) a''(t) a^{(4)}(t) a^{(5)}(t) a(t)^7+42440 a''(t)^3 a^{(6)}(t) a(t)^7+163390 a'(t) a''(t) a^{(3)}(t) a^{(6)}(t) a(t)^7+35550 a'(t)^2 a^{(4)}(t) a^{(6)}(t) a(t)^7+3094 a^{(4)}(t) a^{(6)}(t) a(t)^7+34351 a'(t) a''(t)^2 a^{(7)}(t) a(t)^7+19733 a'(t)^2 a^{(3)}(t) a^{(7)}(t) a(t)^7+1625 a^{(3)}(t) a^{(7)}(t) a(t)^7+6784 a'(t)^2 a''(t) a^{(8)}(t) a(t)^7+520 a''(t) a^{(8)}(t) a(t)^7+308 a'(t)^3 a^{(9)}(t) a(t)^7+52 a'(t) a^{(9)}(t) a(t)^7-2056720 a'(t) a''(t) a^{(3)}(t)^3 a(t)^6-1790580 a''(t)^3 a^{(3)}(t)^2 a(t)^6-900272 a'(t)^2 a''(t) a^{(4)}(t)^2 a(t)^6-31889 a''(t) a^{(4)}(t)^2 a(t)^6-643407 a''(t)^4 a^{(4)}(t) a(t)^6-1251548 a'(t)^2 a^{(3)}(t)^2 a^{(4)}(t) a(t)^6-43758 a^{(3)}(t)^2 a^{(4)}(t) a(t)^6-4452042 a'(t) a''(t)^2 a^{(3)}(t) a^{(4)}(t) a(t)^6-836214 a'(t) a''(t)^3 a^{(5)}(t) a(t)^6-1400104 a'(t)^2 a''(t) a^{(3)}(t) a^{(5)}(t) a(t)^6-48620 a''(t) a^{(3)}(t) a^{(5)}(t) a(t)^6-181966 a'(t)^3 a^{(4)}(t) a^{(5)}(t) a(t)^6-18018 a'(t) a^{(4)}(t) a^{(5)}(t) a(t)^6-319996 a'(t)^2 a''(t)^2 a^{(6)}(t) a(t)^6-11011 a''(t)^2 a^{(6)}(t) a(t)^6-115062 a'(t)^3 a^{(3)}(t) a^{(6)}(t) a(t)^6-11154 a'(t) a^{(3)}(t) a^{(6)}(t) a(t)^6-42764 a'(t)^3 a''(t) a^{(7)}(t) a(t)^6-4004 a'(t) a''(t) a^{(7)}(t) a(t)^6-1649 a'(t)^4 a^{(8)}(t) a(t)^6-286 a'(t)^2 a^{(8)}(t) a(t)^6+460769 a''(t)^6 a(t)^5+1661518 a'(t)^3 a^{(3)}(t)^3 a(t)^5+83486 a'(t) a^{(3)}(t)^3 a(t)^5+13383328 a'(t)^2 a''(t)^2 a^{(3)}(t)^2 a(t)^5+222092 a''(t)^2 a^{(3)}(t)^2 a(t)^5+342883 a'(t)^4 a^{(4)}(t)^2 a(t)^5+36218 a'(t)^2 a^{(4)}(t)^2 a(t)^5+7922361 a'(t) a''(t)^4 a^{(3)}(t) a(t)^5+6367314 a'(t)^2 a''(t)^3 a^{(4)}(t) a(t)^5+109330 a''(t)^3 a^{(4)}(t) a(t)^5+7065862 a'(t)^3 a''(t) a^{(3)}(t) a^{(4)}(t) a(t)^5+360386 a'(t) a''(t) a^{(3)}(t) a^{(4)}(t) a(t)^5+1918386 a'(t)^3 a''(t)^2 a^{(5)}(t) a(t)^5+98592 a'(t) a''(t)^2 a^{(5)}(t) a(t)^5+524802 a'(t)^4 a^{(3)}(t) a^{(5)}(t) a(t)^5+55146 a'(t)^2 a^{(3)}(t) a^{(5)}(t) a(t)^5+226014 a'(t)^4 a''(t) a^{(6)}(t) a(t)^5+23712 a'(t)^2 a''(t) a^{(6)}(t) a(t)^5+8283 a'(t)^5 a^{(7)}(t) a(t)^5+1482 a'(t)^3 a^{(7)}(t) a(t)^5-7346958 a'(t)^2 a''(t)^5 a(t)^4-72761 a''(t)^5 a(t)^4-11745252 a'(t)^4 a''(t) a^{(3)}(t)^2 a(t)^4-725712 a'(t)^2 a''(t) a^{(3)}(t)^2 a(t)^4-27707028 a'(t)^3 a''(t)^3 a^{(3)}(t) a(t)^4-819520 a'(t) a''(t)^3 a^{(3)}(t) a(t)^4-8247105 a'(t)^4 a''(t)^2 a^{(4)}(t) a(t)^4-520260 a'(t)^2 a''(t)^2 a^{(4)}(t) a(t)^4-1848228 a'(t)^5 a^{(3)}(t) a^{(4)}(t) a(t)^4-205296 a'(t)^3 a^{(3)}(t) a^{(4)}(t) a(t)^4-973482 a'(t)^5 a''(t) a^{(5)}(t) a(t)^4-110136 a'(t)^3 a''(t) a^{(5)}(t) a(t)^4-36723 a'(t)^6 a^{(6)}(t) a(t)^4-6747 a'(t)^4 a^{(6)}(t) a(t)^4+17816751 a'(t)^4 a''(t)^4 a(t)^3+721058 a'(t)^2 a''(t)^4 a(t)^3+2352624 a'(t)^6 a^{(3)}(t)^2 a(t)^3+274170 a'(t)^4 a^{(3)}(t)^2 a(t)^3+24583191 a'(t)^5 a''(t)^2 a^{(3)}(t) a(t)^3+1771146 a'(t)^3 a''(t)^2 a^{(3)}(t) a(t)^3+3256248 a'(t)^6 a''(t) a^{(4)}(t) a(t)^3+389376 a'(t)^4 a''(t) a^{(4)}(t) a(t)^3+135300 a'(t)^7 a^{(5)}(t) a(t)^3+25350 a'(t)^5 a^{(5)}(t) a(t)^3-15430357 a'(t)^6 a''(t)^3 a(t)^2-1252745 a'(t)^4 a''(t)^3 a(t)^2-7747848 a'(t)^7 a''(t) a^{(3)}(t) a(t)^2-967590 a'(t)^5 a''(t) a^{(3)}(t) a(t)^2-385200 a'(t)^8 a^{(4)}(t) a(t)^2-73125 a'(t)^6 a^{(4)}(t) a(t)^2+5645124 a'(t)^8 a''(t)^2 a(t)+741195 a'(t)^6 a''(t)^2 a(t)+749700 a'(t)^9 a^{(3)}(t) a(t)+143325 a'(t)^7 a^{(3)}(t) a(t)-749700 a'(t)^{10} a''(t)-143325 a'(t)^8 a''(t))\Big).$
\end{small}
\end{flushleft}

\subsection{Agreement of the result with computations in spherical coordinates.}

Taking a similar route as in \S \ref{DiracinHopf}, we explicitly write the Dirac 
operator for the Roberson-Walker metric in spherical coordinates  
\begin{equation} \nonumber
ds^{2}=dt^{2}+a^{2}\left(  t\right)   \big (  d\chi^{2}+\sin^{2}(\chi) \left(
d\theta^{2}+\sin^{2}(\theta) \, d\varphi^{2}\right)  \big ). 
\end{equation} 
Using the computations carried out in \cite{ChaConRW} with the orthonormal coframe 
\[ dt,  \qquad a(t)\, d \chi,  \qquad a(t)\, \sin \chi \,d \theta, \qquad a(t)\, \sin \chi  \, \sin \theta \,d \varphi,  
\]
the corresponding matrix of connection 1-forms for the Levi-Civita connection is written as 
\begin{small}
\[ \left (
\begin{array}{cccc}
0 &-a'(t)d\chi  & -a'(t)\sin(\chi)d\theta & -a'(t)\sin(\chi)\sin(\theta)d\varphi\\ 
a'(t)d\chi  &0  & -\cos(\chi)d\theta  &-\cos(\chi)\sin(\theta)d\varphi \\ 
a'(t)\sin(\chi)d\theta & \cos(\chi)d\theta & 0 &-\cos(\theta)d\varphi \\ 
 a'(t)\sin(\chi)\sin(\theta)d\varphi&\cos(\chi)\sin(\theta)d\varphi  & \cos(\theta)d\varphi & 0\\
\end{array} \right ).
\]
\end{small}

Lifting to the spin bundle by means of the Lie algebra isomorphism 
$\mu:\mathfrak{so}(4)\to \mathfrak{spin}(4)$ and writing the formula for the Dirac operator 
yield the following expression for this operator expressed in spherical coordiantes:  
\begin{eqnarray}
D &=& \gamma^1 \frac{\partial}{\partial t}+\gamma^2 \frac{1}{a}\frac{\partial}{\partial \chi}+\gamma^3 \frac{1}{a\, \sin \chi} \frac{\partial}{\partial \theta}+\gamma^4 \frac{1}{a\, \sin \chi  \, \sin \theta} \frac{\partial}{\partial \varphi}  \nonumber \\ 
&&+\frac{3a'}{2a}\gamma^1+\frac{\cot(\chi)}{a}\gamma^2+\frac{\cot(\theta)}{2a\sin(\chi)}\gamma^3. \nonumber
\end{eqnarray} 
Thus the pseudodifferential symbol of $D$ is given by 
\begin{eqnarray}
\sigma_D({ x,\xi})&=&i\gamma^1\xi_1 +\frac{i}{a}\gamma^2\xi_2+\frac{i}{a\sin(\chi)}\gamma^3\xi_3+\frac{i}{a\sin(\chi)\sin(\theta)}\gamma^4\xi_4 \nonumber \\
&&+\frac{3a'}{2a}\gamma^1+\frac{\cot(\chi)}{a}\gamma^2+\frac{\cot(\theta)}{2a\sin(\chi)}\gamma^3. \nonumber
\end{eqnarray} 
Accordingly, the symbol of $D^2$ is the sum $p_2'+p_1'+p_0'$ of three homogeneous components  
\begin{eqnarray}p_2'&=&\xi _1^2+\frac{1}{a(t)^2}\xi _2^2+\frac{1}{a(t)^2  \sin ^2(\chi )}\xi _3^2+\frac{ 1}{a(t)^2\sin ^2(\theta ) \sin ^2(\chi )}\xi _4^2, \nonumber \\
 p_1'&=&-\frac{ 3 i a'(t)}{a(t)}\xi _1-\frac{i }{a(t)^2} \left(\gamma ^{12} a'(t)+2 \cot (\chi )\right)\xi _2 \nonumber \\
&&-\frac{i }{a(t)^2} \left(\gamma ^{13} \csc (\chi ) a'(t)+\cot (\theta ) \csc ^2(\chi )+\gamma ^{23} \cot (\chi ) \csc (\chi )\right)\xi _3\nonumber \\
&&-\frac{i }{a(t)^2} (\csc (\theta )  \csc (\chi ) a'(t)\gamma ^{14}+\cot (\theta ) \csc (\theta )  \csc ^2(\chi )\gamma ^{34} \nonumber \\
&&+\csc (\theta )  \cot (\chi ) \csc (\chi )\gamma ^{24} )\xi _4, \nonumber 
\\ p_0'&=&\frac{1}{8 a(t)^2}\left(-12 a(t) a''(t)-6 a'(t)^2+3 \csc ^2(\theta ) \csc ^2(\chi )-\cot ^2(\theta ) \csc ^2(\chi )+\right. \nonumber \\ && \left. 4 i \cot (\theta ) \cot (\chi ) \csc (\chi )-4 i \cot (\theta ) \cot (\chi ) \csc (\chi )-4 \cot ^2(\chi )+5 \csc ^2(\chi )+4\right) \nonumber \\
&&-\frac{\left(\cot (\theta ) \csc (\chi ) a'(t)\right)}{2 a(t)^2}\gamma ^{13} -\frac{ \left(\cot (\chi ) a'(t)\right)}{a(t)^2}\gamma ^{12}-\frac{ (\cot (\theta ) \cot (\chi ) \csc (\chi ))}{2 a(t)^2}\gamma ^{23}. \nonumber
\end{eqnarray}

We have performed the computation of the heat coefficients up to the term $a_{12}$ using the 
latter symbols and have checked the agreement 
of the result with the computations in Hopf coordinates, presented in the previous subsections. 
This is in particular of great importance for 
the term $a_{12}$, since it ensures the validity of our computations performed in two different coordinates.

\subsection{Agreement with the full expansion for the round metric.}

We first recall the full expansion for the spectral action for the round metric, namely the case 
$a(t) = \sin (t)$, worked out in \cite{ChaConRW}. Then we show that 
the term $a_{12}$ presented in \S \ref{exprfora12} reduces correctly to the round case. 

The method devised 
in \cite{ChaConRW} has wide applicability in the spectral action computations since it can be used for the cases when 
 the eigenvalues of the square of the 
Dirac operator have a polynomial expression while their  multiplicities are also given by polynomials. 
In the case of the round metric on $\mathbb{S}^4$,  after remarkable computations based on the 
Euler-Maclaurin formula, this method leads to the following expression with control over the remainder term \cite{ChaConRW}: 
\begin{eqnarray}
   \frac{3}{4}{\rm Trace}(f(tD^2)) &=& \int_0^\infty f(tx^2)(x^3-x)dx+\frac{11 f(0)}{120}-\frac{31 f'(0) t}{2520}
+\frac{41 f''(0) t^2}{10080} \nonumber \\ 
&&-\frac{31 f^{(3)}(0) t^3}{15840}+\frac{10331 f^{(4)}(0) t^4}{8648640}-\frac{3421 f^{(5)}(0) t^5}{3931200}+\dots +R_m. \nonumber
\end{eqnarray}
This implies that the term $a_{12}$ in the expansion of the spectral action for the round metric is equal to 
$\frac{10331}{6486480}$. To check our calculations against this result, we find that for $a(t)=\sin(t)$ 
the expression for $a_{12}(t)$ reduces to  $\frac{10331 \sin ^3(t)}{8648640},$ and hence
$$a_{12}=\int_0^\pi a_{12}(\mathbb{S}^4)\,dt=\frac{4}{3} \frac{10331}{8648640}=\frac{10331}{6486480},$$ 
which is in complete agreement with the result in  \cite{ChaConRW}, mentioned above.

\section{Chameseddine-Connes' Conjecture} \label{ProofofConjecture}

In this section we prove a   conjecture of Chamseddine and Connes from  \cite{ChaConRW}.  More precisely, we   
show that the term $a_{2n}$ in the asymptotic expansion of the spectral action for Robertson-Walker 
metrics is, up to multiplication by $a(t)^{3-2n}$, of the form $Q_{2n}(a,a',\dots,a^{(2n)})$, 
where $Q_{2n}$ is a polynomial with rational coefficients. 

\subsection{Proof of rationality of the coefficients in the expressions for $a_{2n}$}\label{proofofrationality}

A crucial point that enables us to furnish the proof of our main theorem, 
namely the proof of the conjecture mentioned above, is the independence of the integral kernel of 
the heat operator of the Dirac operator 
of the Robertson-Walker metric  from the variables $\phi_1, \phi_2, \eta$. Note that since the symbol and the metric are independent of $\phi_1, \phi_2$, the computations involved in the symbol calculus clearly  imply  the independence of the terms $e_n$ from these variables. However, the independence of $e_n$ from $\eta$ is not evident, which is proved as follows.

\begin{lemma}\label{ind}
The heat kernel $k(t, x, x)$ for the Robertson-Walker metric is independent of $\phi_1,\phi_2, \eta$.
\end{lemma}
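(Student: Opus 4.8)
The plan is to prove this by exploiting the homogeneity of the spatial slices under the obvious isometric action of $SO(4)$, rather than by trying to extract $\eta$-independence directly from the recursion \eqref{rnjarec}. The rotation group $SO(4)$ acts on $M=\mathbb{R}\times\mathbb{S}^3$ by $(t,p)\mapsto(t,g\cdot p)$; since the round metric $d\sigma^2$ is $SO(4)$-invariant and the warping function $a$ depends only on $t$, this is an action by orientation-preserving isometries which fixes the $t$-coordinate and is transitive on each slice $\{t\}\times\mathbb{S}^3$. I would use this symmetry to show that the diagonal heat kernel is constant along slices, hence a function of $t$ alone.

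First I would promote this to the spinor bundle. Since $\mathbb{S}^3$ is simply connected it carries a unique spin structure, and the connected group $SO(4)$ of orientation-preserving isometries lifts canonically to a $\mathrm{Spin}(4)$-action on the spinor bundle $S\to M$ by bundle maps $U_g$ covering $g$, with $U_gDU_g^{-1}=D$ by naturality of the Dirac operator; hence $U_g\,e^{-tD^2}\,U_g^{-1}=e^{-tD^2}$. In the trivialization of $S$ attached to the orthonormal coframe, $U_g$ has the form $(x,w)\mapsto(g\cdot x,\rho(g,x)w)$, where $\rho(g,x)\in\mathrm{Spin}(4)\subset GL_4(\mathbb{C})$ is the spin lift of the frame rotation induced by $g$, and the Schwartz kernel of $e^{-tD^2}$ satisfies on the diagonal
\[
k(t,g\cdot x,g\cdot x)=\rho(g,x)\,k(t,x,x)\,\rho(g,x)^{-1}.
\]
Then I would analyse the fibrewise structure of $k(t,x,x)$. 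Because $D$ anticommutes with the chirality element $\gamma^1\gamma^2\gamma^3\gamma^4$, the operator $D^2$ and hence $e^{-tD^2}$ preserve the half-spinor subbundles $S^\pm$, so $k(t,x,x)$ is block diagonal with respect to $S=S^+\oplus S^-$. Identifying $\mathbb{S}^3$ with $SU(2)$, on which $SO(4)$ acts by $q\mapsto q_Lqq_R^{-1}$, and taking $x_0$ to be the point $q=1$, the isotropy group of $x_0$ lifts to the diagonal $SU(2)\subset SU(2)\times SU(2)\cong\mathrm{Spin}(4)$, which acts on each two-dimensional summand $S^\pm_{x_0}$ as the standard irreducible representation. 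Applying the displayed identity with $g$ in this isotropy group shows $k(t,x_0,x_0)$ commutes with the isotropy representation on $S_{x_0}$, hence by Schur's lemma is a scalar on each chirality block; in particular it commutes with every $\rho(g,x_0)$, each of which preserves chirality. The displayed identity then gives $k(t,g\cdot x_0,g\cdot x_0)=k(t,x_0,x_0)$ for all $g\in SO(4)$, and since the action is transitive on $\{t\}\times\mathbb{S}^3$ we conclude that $k(t,x,x)$ is independent of the point of $\mathbb{S}^3$, i.e.\ of $\eta,\phi_1,\phi_2$ (independence of $\phi_1,\phi_2$ is of course already visible from the symbol calculus).

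The main obstacle is the spinorial step: one must verify that the isometric $SO(4)$-action genuinely lifts to the spinor bundle of the warped product with $D$ equivariant — here simple connectedness of $\mathbb{S}^3$ and the product structure in the $t$-direction are what make this work — and then identify the isotropy representation on each half-spinor summand precisely enough to invoke Schur's lemma; by contrast, proving $\eta$-independence from \eqref{rnjarec} would require exhibiting a web of nonobvious cancellations in $\eta$, which the symmetry argument sidesteps. I would also remark that for the application in \S\ref{proofofrationality} one only needs the fibrewise trace $\mathrm{tr}\,k(t,x,x)$, equivalently $\mathrm{tr}\,e_n(x)$, to be slice-independent, and this follows at once from the displayed identity and conjugation invariance of the trace, without the chirality and Schur arguments.
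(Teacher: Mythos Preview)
Your proposal is correct and takes a genuinely different route from the paper. Both arguments rest on the transitive isometric action of $SO(4)$ (equivalently, of $SU(2)\times SU(2)$ by left and right translation) on the spatial slices, but the paper works infinitesimally and in coordinates: it writes down explicit bases $X^L_j,X^R_j$ of left- and right-invariant vector fields on ${\rm SU}(2)$ in Hopf coordinates, checks that these are Killing for the warped product, and then expresses $\partial_{\phi_1},\partial_{\phi_2},\partial_\eta$ as linear combinations of them, concluding that any isometry-invariant function is annihilated by these coordinate derivatives. Your argument works globally with the group action, lifts it to the spinor bundle, and invokes Schur's lemma for the isotropy representation on each half-spinor to show that $k(t,x_0,x_0)$ is block-scalar and hence genuinely constant (not merely conjugate) along the slice. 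Your approach is coordinate-free and handles the matrix-valued nature of the spinorial heat kernel carefully---a point the paper's proof glosses over by treating $k(t,x,x)$ as a scalar isometry-invariant function; as you rightly observe, the looser statement already suffices for Theorem~\ref{rationalitytheorem} since only $\mathrm{tr}\,e_n$ enters there. The paper's approach, on the other hand, is entirely explicit in Hopf coordinates and requires no appeal to lifting isometries to spin bundles or to representation theory.
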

\begin{proof}
The round metric on $\mathbb{S}^3$ is the bi-invariant metric on ${\rm SU}(2)$ induced from the Killing form of its Lie algebra $\mathfrak{su}(2)$. The corresponding Levi-Civita connection restricted to the  left invariant vector fields is given by $\frac{1}{2}[X,Y]$, and to the right invariant vector fields by $\frac{-1}{2}[X,Y]$. Since the Killing form is ${\rm ad}$-invariant, we have
$$\langle [X,Y],Z\rangle+\langle Y,[X,Z]\rangle=0,\qquad X,Y,Z\in \mathfrak{su}(2),$$
which implies that in terms of the connection on left (right) invariant  vector fields $X,Y,Z$, it can be written as
\begin{equation}\label{Killingequ}
\langle \nabla_YX,Z\rangle+\langle Y,\nabla_ZX\rangle=0.
\end{equation}
Considering the fact that $\nabla X:\mathfrak{X}(M)\to \mathfrak{X}(M)$ is an endomorphism of the tangent bundle, the latter identity holds for any  $Y,Z\in\mathfrak{X}(M)$. Therefore, the equation \eqref{Killingequ} is the Killing equation and shows that any left and right invariant vector field on ${\rm SU}(2)$ is a Killing vector field.

By direct computation in  Hopf coordinates, we find the following vector fields which respectively form bases for left and right invariant vector fields on ${\rm SU}(2)$:
\begin{align*} 
X^L_1&=\frac{\partial}{\partial \phi_1}+\frac{\partial}{\partial \phi_2},\\
X^L_2&=\sin (\phi_1+\phi_2)\frac{\partial}{\partial \eta}+\cot(\eta )\cos (\phi_1+\phi_2)\frac{\partial}{\partial \phi_1}-\tan (\eta )\cos (\phi_1+\phi_2) \frac{\partial}{\partial \phi_2},\\
X^L_3&=\cos (\phi_1+\phi_2)\frac{\partial}{\partial \eta}-\cot (\eta ) \sin (\phi_1+\phi_2)\frac{\partial}{\partial \phi_1}+\tan (\eta ) \sin(\phi_1+\phi_2)\frac{\partial}{\partial \phi_2},\\ 
X^R_1&=-\frac{\partial}{\partial \phi_1}+\frac{\partial}{\partial \phi_2},\\
X^R_2&=-\sin(\phi_1-\phi_2)\frac{\partial}{\partial \eta}-\cot (\eta ) \cos (\phi_1-\phi_2)\frac{\partial}{\partial \phi_1}-\tan(\eta )\cos(\phi_1-\phi_2)\frac{\partial}{\partial \phi_2},\\
X^R_3&=\cos(\phi_1-\phi_2)\frac{\partial}{\partial \eta}-\cot (\eta ) \sin(\phi_1-\phi_2)\frac{\partial}{\partial \phi_1}-\tan(\eta )\sin (\phi_1-\phi_2)\frac{\partial}{\partial \phi_2}.
\end{align*}
One can check that these vector fields are indeed Killing vector fields for the Robertson-Walker metrics on the four dimensional space.
Thus, for any isometry invariant function $f$ we have:
\begin{eqnarray}
&&\frac{\partial}{\partial \phi_1}f=\frac{1}{2}(X^L_1-X^R_1)f=0, \nonumber \\
&&\frac{\partial}{\partial \phi_2}f=\frac{1}{2}(X^L_1+X^R_1)f=0,\nonumber \\
&&\frac{\partial}{\partial \eta}f=(\sin(\phi_1+\phi_2)X^L_2+\cos(\phi_1+\phi_2)X^L_3)f=0. \nonumber
\end{eqnarray}
In particular, the heat kernel restricted to the diagonal, $k(t, x, x)$, is independent of $\phi_1, \phi_2,\eta$, and so are the coefficient functions $e_n$ in its asymptotic expansion. 

\end{proof}

We stress that although $e_n (x)$ is independent of $\eta,\phi_1,\phi_2$, its components denoted by 
$e_{n, j, \alpha}$ in the proof of the following theorem are not necessarily independent of these variables. 

\begin{theorem} \label{rationalitytheorem}
The term $a_{2n}$ in the expansion of the spectral action for the Robertson-Walker metric with 
cosmic scale factor $a(t)$ is of the form
\[
\frac{1}{a(t)^{2n-3}}\,Q_{2n}\left(a(t),a'(t),\dots,a^{(2n)}(t)\right),
\]
where $Q_{2n}$ is a polynomial with rational coefficients.  
\end{theorem}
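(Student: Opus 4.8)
The plan is to track, term by term in the recursive construction of the parametrix symbols $r_n$, exactly how powers of $a(t)$ and of the trigonometric functions $\sin\eta$, $\cos\eta$ enter the coefficient functions $r_{n,j,\alpha}$. The key observation is that $e_n(x)$ is independent of $\eta$ by Lemma~\ref{ind}, so all the $\eta$-dependence that appears in the intermediate quantities $r_{n,j,\alpha}$ must cancel once the $\xi$-integration in \eqref{en} is performed; the only genuine freedom left is the dependence on $a(t)$ and its derivatives, and I must show that this dependence is a rational-coefficient polynomial divided by a controlled power of $a(t)$.

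First I would set up a bookkeeping of weights. Inspecting \eqref{symbolHopf}, every occurrence of $a$ enters as a negative power, and each derivative $a^{(m)}$ carries the same ``scaling weight'' as $a$ itself; concretely I would assign weight $+1$ to each factor $a^{(m)}(t)$ (for all $m\ge 0$) and show by induction on $n$, using the recursion \eqref{rnjarec}, that $r_{n,j,\alpha}$ is $a(t)^{-(\text{something})}$ times a polynomial with rational coefficients in the $a^{(m)}(t)$ and in $\cot\eta,\tan\eta,\csc\eta,\sec\eta$ (equivalently, a rational function of $\sin\eta,\cos\eta$ with rational coefficients). The rationality of the coefficients is immediate from \eqref{rnjarec}: every numerical constant appearing there — the $\pm 1$'s, the factors $2-j$, $4-2j$, $3-j$ — is an integer, $r_0$ and the $p_k$ have rational-coefficient symbols, and $p_0,p_1,p_2$, their $\xi$-derivatives and their $x$-derivatives all stay within this class. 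So rationality propagates through the induction with no arithmetic subtlety. Then in \eqref{en} the $\xi$-integration contributes the factor $c_\alpha/(j-1)!$ together with the explicit powers $a^{\alpha_2+\alpha_3+\alpha_4+3}\sin^{\alpha_3+1}\cos^{\alpha_4+1}$; since $c_\alpha$ is a product of values $\Gamma\!\left(\frac{\alpha_k+1}{2}\right)$ at half-integers and $(j-1)!$ is an integer, the quotient is rational (the $\sqrt\pi$'s from the Gamma values occur an even number of times because $a_{2n}$ is even, and in any case cancel against the $1/16\pi^4$ normalization as they did in the round-case check). Finally the $\eta$-integral $\int_0^{\pi/2}(\cdots)\sin\eta\cos\eta\,d\eta$ of a rational-coefficient rational function of $\sin\eta,\cos\eta$ that is known a priori to produce an $\eta$-free answer must produce a rational number, so no transcendental constant can survive.

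The power-counting for the $a(t)$ exponent is the part that needs care, and it is where I would spend the most effort. Homogeneity forces the constraint $2j-2-|\alpha|=n$ from \eqref{rnja}, and I would combine this with an induction showing that the total negative power of $a$ carried by $r_{n,j,\alpha}$, after accounting for the explicit $a^{\alpha_2+\alpha_3+\alpha_4+3}$ produced by integration, is bounded by $2n-j-1$ or some such linear expression in $n$ and $j$; summing over the allowed range $n/2+1\le j\le 2n+1$ and taking the worst case should yield exactly the claimed overall factor $a(t)^{-(2n-3)}$ for $a_{2n}$. One consistency check is built in: for $n$ small the formulas in \S\ref{Termsupto10} display denominators $a^0, a^0, a^2, a^4, a^6$ for $a_{2},a_4,a_6,a_8,a_{10}$, i.e.\ $a^{2n-3}$ once $2n-3\ge 0$, with the earliest terms being honest polynomials. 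I expect the main obstacle to be making this exponent bound tight rather than merely showing some power of $a$ suffices: the derivative $\partial/\partial t$ acting on $r_{j}$ in \eqref{rnjarec} both lowers the $a$-power (differentiating $a^{-k}$) and produces an extra $a'$, so one has to set up the induction hypothesis carefully enough — tracking the pair (number of $a$-factors in the numerator, power of $a$ in the denominator) rather than a single integer — so that it survives each of the eight types of terms in \eqref{rnjarec} simultaneously. Once that invariant is identified and checked against the $\eta$-independence of $e_n$, the theorem follows by assembling the pieces in \eqref{en}.
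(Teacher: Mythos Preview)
Your approach matches the paper's: both run the induction through the parametrix recursion \eqref{rnjarec}, invoke Lemma~\ref{ind} to kill the $\eta$-dependence, and balance the $\sqrt\pi$'s in $c_\alpha$ (which arise because the factor $\tfrac{(-1)^{\alpha_k}+1}{2}$ forces every $\alpha_k$ to be even, not because $n$ is even) against the $\pi^{-4}$ normalisation. The paper's one extra device, which cleanly resolves the power-counting step you correctly flagged as the delicate one, is to absorb the $\alpha$-dependent factor into the inductive object by setting $e_{n,j,\alpha}:=\tfrac{1}{(j-1)!}\,r_{n,j,\alpha}\,a(t)^{\alpha_2+\alpha_3+\alpha_4}\sin(\eta)^{\alpha_3}\cos(\eta)^{\alpha_4}$; the denominator power of $a(t)$ in $e_{n,j,\alpha}$ then becomes uniform in $(j,\alpha)$ and the recursion immediately gives $d_n=\max\{d_{n-1}+1,\,d_{n-2}+2\}=n$, whence the sharp exponent $a(t)^{-(2n-3)}$.
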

\begin{proof}
Using \eqref{en} we can write
 \begin{align}\label{ensum}
 e_n&=\sum_{\begin{array}{c}2j-2-|\alpha|=n\\ n/2+1\leq j\leq 2n+1\end{array}}  c_\alpha \,e_{n,j,\alpha},
 \end{align}
where
$$e_{n,j,\alpha}=\frac{1}{(j-1)!}r_{n,j,\alpha} \,a(t)^{\alpha_2+\alpha_3+\alpha_4}\sin(\eta)^{\alpha_3}\cos(\eta)^{\alpha_4}.$$
The recursive equation \eqref{rnjarec} implies that 
\begin{equation} \label{recforenja}
e_{n,j,\alpha}=
\end{equation} 
\begin{small}
\begin{flushleft} 
$\frac {1}{(j-1)a(t)}\Big( 
    (\gamma^{14} a'(t)-\tan(\eta)\gamma^{24} )e_{n-1,j-1,\alpha-{\bf e}_4}
   +(\gamma^{13}a'(t)+\cot(\eta)\gamma^{23})e_{n-1,j-1,\alpha-{\bf e}_3}
    +(\gamma^{12}a'(t)+1((2\alpha_4-1)\tan(\eta)+(1-2\alpha_3)\cot(\eta)))e_{n-1,j-1,\alpha-{\bf e}_2}
    +4a'(t)e_{n-1,j-2,\alpha-{\bf e}_1-2{\bf e}_2}
    +4a'(t)e_{n-1,j-2,\alpha-{\bf e}_1-2{\bf e}_3} 
    + 4a'(t)e_{n-1,j-2,\alpha-{\bf e}_1-2{\bf e}_4} 
    + (-2\alpha_2-2\alpha_3-2\alpha_4 + 3 )a' (t)e_{n-1,j-1,\alpha-{\bf e}_1} 
    +2a(t)\frac{\partial} {\partial t} e_{n-1,j-1,\alpha-{\bf e}_1}
 -4\tan(\eta)e_{n-1,j-2,\alpha-{\bf e}_2-2{\bf e}_4}
  +4\cot(\eta)e_{n-1,j-2,\alpha-{\bf e}_2-2{\bf e}_3}
   + 2\frac{\partial}{\partial\eta} e_{n-1,j-1,\alpha-{\bf e}_2}\Big) \newline
 + \frac {1} {(j-1)a(t)^2}
  \Big(
  a(t)^2\frac {\partial^2} 
{\partial t^2} e_{n-2,j-1,\alpha}  
+ 4a'(t)a(t)\frac {\partial}{\partial t} e_{n-2,j-2,\alpha-2{\bf e}_2}  
+ 4 a'(t)a(t)\frac{\partial} {\partial t} e_{n-2,j-2,\alpha-2{\bf e}_3} 
+ 4a' (t) a (t)\frac{\partial} {\partial t} e_{n-2,j-2,\alpha-2{\bf e}_4} 
 + (-2\alpha_2-2\alpha_3-2\alpha_4+3)
 a'(t)a(t) 
 \frac{\partial} {\partial t} e_{n-2,j-1,\alpha}
 + 4a'(t)^2 e_{n-2,j-3,\alpha-4{\bf e}_2}  
 +8a'(t)^2 e_{n-2,j-3,\alpha-2{\bf e}_2-2{\bf e}_3}  
 +8 a' (t)^2e_{n-2,j-3,\alpha-2{\bf e}_2-2{\bf e}_4} 
    +4\cot(\eta)\frac{\partial}{\partial\eta}e_{n-2,j-2,\alpha-2{\bf e}_3}
    - 4\tan(\eta)\frac {\partial} {\partial\eta} e_ {n-2,j-2,\alpha-2{\bf e}_4} 
     +\frac{\partial^2}{\partial\eta^2} e_{n-2,j-1,\alpha} 
     +\big(2\cot(\eta)\gamma^{12}a'(t)+
         (-
         4(\alpha_2+\alpha_3+\alpha_4-2)a'(t)^2+ 4 (-(\alpha_3-1)\csc^2(\eta)+\alpha_3+\alpha_4-2) + 
         2 a(t)a''(t)
         )
     \big)e_{n-2,j-2,\alpha-2{\bf e}_3}
+ \big(
     (\cot(\eta)(1-2\alpha_3)+(2\alpha_4-1)\tan (\eta)) + \gamma^{12} a'(t)
\big)\frac {\partial}{\partial\eta} e_{n-2,j-1,\alpha}  
 +\big((-4(\alpha_2+\alpha_3+\alpha_4- 2) a'(t)^2 +4(-(\alpha_4-1)\sec^2(\eta)+\alpha_3 + \alpha_4-2 )+2a(t)a''(t)) - 2\gamma^{12}\tan(\eta)a'(t)
\big)e_{n-2,j-2,\alpha-2{\bf e}_4} 
     +8(a'(t)^2-1) 
     e_{n-2,j-3,\alpha-2{\bf e}_4}
    + 4(\cot^2(\eta)+a'(t)^2)
    e_{n-2,j-3,\alpha-4{\bf e}_3}
      +4(\tan^2(\eta)+a'(t)^2 )
      e_{n-2,j-3,\alpha-4{\bf e}_4}
       + (2a(t)a''(t)-4(\alpha_2+\alpha_3+ \alpha_4-2)a'(t)^2)e_{n-2,j-2,\alpha-2{\bf e}_2}
       +\big(\frac{1}{2}
        	(\cot(\eta)(1-2\alpha_3)+(2\alpha_4-1)\tan(\eta))\gamma^{12} a'(t)+\frac{1}{4}
        	((4\alpha_3^2-1)\csc^2 (\eta)-4(\alpha_3+\alpha_4-1)^2 +(2\alpha_2+2\alpha_3+2\alpha_4-3)(2\alpha_2+2\alpha_3+2\alpha_4-1) a'(t)^2 +\sec^2(\eta)(4\alpha_4^2-1)- 2(2\alpha_2+2\alpha_3+2\alpha_4-3)a(t) a''(t)
        	) 
        \big)e_{n-2,j-1,\alpha}
        \Big).
   $
\end{flushleft}
\end{small}
The functions associated with the initial indices are:    
\begin{eqnarray}
&&e_{0,1,0,0,0,0}=1, \qquad e_ {1,2,1,0,0,0}= \frac {3ia'(t)}{a(t)}, \qquad e_ {1,3,1,2,0,0}= \frac {2ia'(t)}{a(t)}, \nonumber \\
&& e_ {1,3,1,0,2,0}= \frac{2ia'(t)}{a (t)}, \qquad
e_ {1,3,1,0,0,2}= \frac{2ia'(t)}{a (t)}, \qquad 
e_ {1,3,0,1,0,2}=-\frac{(2 i)\tan(\eta)}{a(t)}, \nonumber \\
&&e_ {1,3,0,1,2,0}=\frac{(2 i)\cot(\eta)}{a(t)}, \qquad 
e_ {1, 2, 0, 0, 1,0}= \frac {i\gamma ^{13} a'(t)} {a (t)}+\frac{i\gamma^{23}\cot(\eta)}{a (t)}, \nonumber \\ 
&& e_ {1,2,0,0,0,1}=\frac{i\gamma ^{14}a'(t)}{a(t)}-\frac{i\gamma^{24}\tan(\eta)}{a (t)}, \qquad 
e_ {1,2,0,1,0,0}=\frac{2i\cot(2\eta)}{a (t)}+ \frac{i\gamma^{12}a'(t)}{a(t)}. \nonumber
\end{eqnarray}

It is then apparent that  $e_{0}$ and $e_1$ are, respectively, a polynomial in 
$a(t)$, and a polynomial in $a(t)$ and $a'(t)$, divided by some powers of $a(t)$. 
Thus, it follows from the above recursive formula that all  $e_{n,j,\alpha}$ are of this 
form. Accordingly, we have $$e_n=\frac{P_n}{a(t)^{d_n}},$$ where $P_n$ is a 
polynomial in $a(t)$ and its derivatives with matrix coefficients. 
Writing $e_{n,j,\alpha}=P_{n,j,\alpha}/a(t)^{d_n}$, we obtain $d_n=\max\{d_{n-1}+1,d_{n-2}+2\}.$
Starting with $d_0=0$, $d_1=-1$, and following to obtain $d_n=n$, we conclude that  
\[
e_{n,j,\alpha}=\frac{1}{a^{n}(t)}P_{n,j,\alpha}(a(t),\dots,a^{(n)}(t)),
\]
where $P_{n,j,\alpha}$ is a polynomial whose coefficients are matrices with entries in 
the algebra generated by $\sin(\eta),\csc(\eta), \cos(\eta), \sec(\eta)$ and rational numbers.

In the calculation of the even terms $a_{2n}$, only even $\alpha_k$ have contributions in
the summation \eqref{ensum}. This implies that the corresponding $c_\alpha$ is a rational multiple of $\pi^2$ 
and  $P_{2n}$ is a polynomial with rational matrix coefficients, which is independent of 
variables $\eta,\phi_1,\phi_2$ by Lemma \ref{ind}. Hence
\[
a_{2n}=\frac{1}{16\pi^4}\int_{\mathbb{S}_a^3}{\rm tr}(e_{2n})\,dvol_g=
\frac{2\pi^2a(t)^3}{16\pi^4}\,{\rm tr}\Big(\frac{P_{2n}}{a(t)^{2n}}\Big)=
\frac{Q_{2n}}{a(t)^{n-3}},
\]
where $Q_{2n}$ is a polynomial in $a(t), a'(t), \dots, a^{(2n)}(t)$ with rational coefficients. 
\end{proof}

The polynomials $P_{n, j, \alpha}$ also satisfy recursive relations that  
illuminate interesting features about their structure.

\begin{proposition} \label{monomialformofpnja}

Each $P_{n, j ,\alpha}$ is a finite sum of the form 
\[
\sum c_k \, a(t)^{k_0}a'(t)^{k_1}\cdots a^{(n)}(t)^{k_n},
\]
where each $c_k$ is a matrix of functions that are independent from the variable $t$, and 
$
\sum_{j=0}^n k_j=\sum_{j=0}^n jk_j=l,$
for some $0 \leq l \leq n$. 

\end{proposition}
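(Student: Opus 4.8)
The plan is to prove Proposition \ref{monomialformofpnja} by induction on $n$, reading off the claimed grading (total weight $l$ and homogeneity $\sum k_j = \sum j k_j = l$) directly from the recursion \eqref{recforenja}. The key observation is that the recursion for $e_{n,j,\alpha}$, once multiplied through by the power of $a(t)$ that turns $e_{n,j,\alpha}$ into $P_{n,j,\alpha}/a(t)^n$, becomes a recursion on the polynomials $P_{n,j,\alpha}$ themselves, and every term on the right-hand side either (a) multiplies a $P_{n-1,\bullet,\bullet}$ by one factor among $a'(t), \partial_t(\,\cdot\,), \cot\eta, \tan\eta$, $1$, or (b) multiplies a $P_{n-2,\bullet,\bullet}$ by a factor that is quadratic in $a$-type quantities: $a'(t)^2$, $a(t)a''(t)$, $a'(t)\,a(t)\partial_t$, $a(t)^2\partial_t^2$, $\partial_\eta^2$, or products of two first-order $a$-type factors with an $\eta$-coefficient. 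So each step raises the ``derivative weight'' by exactly what is needed to compensate the change in $n$.

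First I would fix the right bookkeeping. For a monomial $a(t)^{k_0}a'(t)^{k_1}\cdots a^{(m)}(t)^{k_m}$ define its \emph{degree} $\deg = \sum_j k_j$ (number of $a$-factors) and its \emph{weight} $\mathrm{wt} = \sum_j j k_j$ (total order of differentiation). The claim is that $P_{n,j,\alpha}$ is a sum of monomials each with $\deg = \mathrm{wt} = l$ for a single common value $0 \le l \le n$ — equivalently, each monomial is isobaric of some order $l\le n$ with the number of factors equal to that order. The base cases $n=0,1$ are read off from the explicit list: $e_{0,1,\mathbf 0}=1$ gives $P_{0,1,\mathbf 0}=1$ with $l=0$; and every $e_{1,j,\alpha}$ listed is $\bigl(\text{const}\cdot\eta\text{-function}\bigr)\,a'(t)/a(t)$, so $P_{1,j,\alpha}$ is a multiple of $a'(t)$, which has $\deg=\mathrm{wt}=1 = l$.

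Next, for the inductive step I would check that each of the operations appearing in \eqref{recforenja} preserves the isobaric structure with the correct shift. Passing from $e_{m,j,\alpha}$ to $P_{m,j,\alpha}$ absorbs a factor $a(t)^m$, and the recursion expresses $e_{n,j,\alpha}$ as $\tfrac{1}{a(t)}(\text{order-1 operators on }e_{n-1})+\tfrac{1}{a(t)^2}(\text{order-2 operators on }e_{n-2})$; multiplying by $a(t)^n$ converts this to: $P_{n,j,\alpha}$ = (order-1 operators on $P_{n-1,\bullet,\bullet}$) + (order-2 operators on $P_{n-2,\bullet,\bullet}$), with the $a(t)$-powers now matching exactly because $n - 1 = (n-1)$ and $n-2 = (n-2)$. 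Now multiplication by $a'(t)$ raises both $\deg$ and $\mathrm{wt}$ by $1$; the operator $\partial_t$ applied to a weight-$l$, degree-$l$ monomial sends each $a^{(j)}(t)^{k_j}$-factor to $a^{(j+1)}(t)\cdot(\text{rest})$, hence also raises $\mathrm{wt}$ by $1$ and $\deg$ by $0$ — wait, that breaks $\deg=\mathrm{wt}$ — so here is where one must be careful: $\partial_t$ never appears alone, it always appears as $a(t)\partial_t$ (from $2a(t)\partial_t e_{n-1,\dots}$) or $a'(t)a(t)\partial_t$ or $a(t)^2\partial_t^2$; in every case the accompanying $a(t)$-factors restore $\deg$ to equal $\mathrm{wt}$. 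I would tabulate each of the roughly fifteen terms on the RHS, verifying that the $a$-type prefactor contributes $(\Delta\deg,\Delta\mathrm{wt})=(1,1)$ for the $n\!-\!1$ terms and $(2,2)$ for the $n\!-\!2$ terms, while the $\eta$-dependent coefficients (powers of $\cot\eta,\tan\eta,\csc\eta,\sec\eta$, the integers, and the $\gamma$-matrices) are absorbed into the ``$t$-independent matrix $c_k$'' and carry weight $0$. By induction $P_{n-1,\bullet,\bullet}$ has some isobaric order $l'\le n-1$, so the $n\!-\!1$ terms produce order $l'+1\le n$; likewise the $n\!-\!2$ terms produce order $l''+2\le n$. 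This is exactly the claim.

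\textbf{The main obstacle} I anticipate is not any single computation but verifying that the common-$l$ property is preserved — i.e. that within a fixed $P_{n,j,\alpha}$ all surviving monomials share \emph{one} value of $l$, rather than a spread of values. This forces one to track how $l$ depends on the triple $(n,j,\alpha)$: from the degree constraint $2j-2-|\alpha|=n$ and the explicit base values one expects a formula such as $l = l(n,j,\alpha)$ (plausibly $l = 2(j-1) - n - (\text{something involving }|\alpha|)$, to be pinned down), and one must check the recursion \eqref{recforenja} maps the correct $(n-1,j',\alpha')$ and $(n-2,j'',\alpha'')$ indices — those with shifted $j$ and $\alpha$ as dictated by each term — to the \emph{same} output value $l(n,j,\alpha)$. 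Concretely, a term like $-p_0 r_{n-2,j-1,\alpha}$ carries the $(2,2)$-shift and uses index $(n-2,j-1,\alpha)$, while $-\sum_k \partial_{\xi_k}p_1\, r_{n-1,j-1,\alpha-\mathbf e_k}$ carries a $(1,1)$-shift and uses $(n-1,j-1,\alpha-\mathbf e_k)$; one checks $l(n-2,j-1,\alpha)+2 = l(n-1,j-1,\alpha-\mathbf e_k)+1 = l(n,j,\alpha)$. Establishing this compatibility for all terms — essentially a linear-algebra consistency check on the index shifts — is the heart of the proof; once it is in place, the degree/weight claim follows by the routine induction sketched above, and the bound $l\le n$ comes from the same check together with the base case $l\le 0,1$ for $n=0,1$.
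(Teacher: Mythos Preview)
Your inductive strategy via the recursion \eqref{recforenja} is exactly what the paper does (its proof is just a sentence pointing at this recursion plus the listed initial values), so the architecture is right. But you have misread the proposition, and this misreading generates your ``main obstacle'', which is in fact a phantom.

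The statement does \emph{not} assert that all monomials in a given $P_{n,j,\alpha}$ share one common value of $l$; it asserts only that each monomial separately satisfies $\deg=\mathrm{wt}=l$ for some $l\le n$ (allowed to vary monomial by monomial). The base cases themselves refute the common-$l$ reading: e.g.\ $P_{1,2,0,1,0,0}=2i\cot(2\eta)+i\gamma^{12}a'(t)$ contains a monomial with $l=0$ (the pure $\eta$-term) and one with $l=1$. The same happens in $P_{1,2,0,0,1,0}$ and $P_{1,2,0,0,0,1}$. So your claim that ``every $e_{1,j,\alpha}$ listed is $(\text{const}\cdot\eta\text{-function})\,a'(t)/a(t)$'' is false, and the whole programme of finding a formula $l=l(n,j,\alpha)$ and checking index-shift compatibility should be dropped.

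Relatedly, your tabulation of shifts is too coarse. It is not true that every $(n{-}1)$-term carries a $(1,1)$-shift and every $(n{-}2)$-term a $(2,2)$-shift. For instance, the factor $\tan(\eta)\gamma^{24}$ multiplying $e_{n-1,j-1,\alpha-\mathbf e_4}$ contributes a $(0,0)$-shift; the $\partial_\eta$ and $\partial_\eta^2$ terms are $(0,0)$; the factor $\cot(\eta)\gamma^{12}a'(t)$ in the $(n{-}2)$-block is $(1,1)$. What \emph{is} true, and is all you need, is that every coefficient in the recursion is a sum of products of $\eta$-functions with factors from $\{1,\,a'(t),\,a'(t)^2,\,a(t)a''(t),\,a(t)\partial_t,\,a'(t)a(t)\partial_t,\,a(t)^2\partial_t^2\}$, each of which has $\Delta\deg=\Delta\mathrm{wt}\in\{0,1,2\}$, with the maximum shift for the $(n{-}1)$-block being $1$ and for the $(n{-}2)$-block being $2$. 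Induction then gives: each monomial of $P_{n,j,\alpha}$ has $\deg=\mathrm{wt}=l$ with $l\le\max(n{-}1{+}1,\,n{-}2{+}2)=n$. That is the entire content of the proposition; once you drop the spurious common-$l$ constraint, the proof is the routine induction you sketched, matching the paper's.
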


 \begin{proof}

This  follows from  an algebraically lengthy recursive formula for $P_{n, j ,\alpha}$ 
which stems from the equation \eqref{rnjarec}, similar to the recursive formula for $e_{n, j, \alpha}$ in the proof  of Theorem  \ref{rationalitytheorem}. In addition, one needs to find the following initial cases: 

\begin{eqnarray}
&&P_ {0, 1, 0, 0, 0, 0} = I,  \qquad P_ {1, 2, 1, 0, 0, 0} = 3 i a' (t), \quad P_ {1, 2, 0, 0, 1, 0} = i\gamma ^{13} a' (t) + i\gamma ^{23}\cot (\eta),  \nonumber \\ 
&&P_ {1, 2, 0, 0, 0, 1} = i\gamma ^{14} a' (t) - i\gamma^{24}\tan (\eta), \qquad P_ {1, 2, 0, 1, 0, 0} = 2 i\cot (2\eta) + i\gamma^{12} a' (t), \nonumber \\
&&P_ {1, 3, 0, 1, 0, 2} = -2 i\tan (\eta),\qquad P_ {1, 3, 0, 1, 2, 0} = 2 i\cot (\eta), \qquad P_ {1, 3, 1, 2, 0, 0} = 2 i a' (t),\nonumber \\
&&P_ {1, 3, 1, 0, 2, 0} = 2 i a' (t),  \qquad P_ {1, 3, 1, 0, 0, 2} = 2 i a' (t). \nonumber 
\end{eqnarray}
\end{proof}

\subsection{A recursive formula for the coefficient of the highest order term in $a_{2n}$}

The highest derivative of the cosmic scale factor $a(t)$ in the expression for $a_n$ is seen in the term 
$a(t)^{n-1}a^{(n)}(t)$, which has a rational coefficient based on Theorem \ref{rationalitytheorem}. Let us 
denote the coefficient of  $a(t)^{n-1}a^{(n)}(t)$ in $a_n$ by $h_n$. Since the coefficients $h_n$ are limited to satisfy 
the recursive relations derived in the proof of the following proposition, one can find the following closed 
formula for these coefficients.

\begin{proposition}
The coefficient $h_n$ of $a(t)^{n-1}a^{(n)}(t)$ in $a_n$ is equal to 
\[
\sum_{\begin{array}{c}
[n/2]+1\leq j\leq 2n+1\\ 
0\leq k \leq j-n/2-1
\end{array}} \Gamma\left(\frac{2k+1}{2}\right)H_{n,j,2k},
\] 
where, starting from   
\begin{eqnarray}
&&H_{1,2,1}=H_{1,3,1}=\frac{3 i}{2 \sqrt{\pi }},\qquad
H_{2,4,2}=-\frac{1}{\sqrt{\pi }},    \nonumber \\
&&H_{2,3,0}=H_{2,2,0}=\frac{3}{4 \sqrt{\pi }},\qquad  H_{2,3,2}=-\frac{3}{2 \sqrt{\pi }}, \nonumber
\end{eqnarray}
the quantities $H_{n,j,\alpha}$ are computed recursively by 
\[H_{n,j,\alpha}=\frac{1}{j-1}(H_{n-2,j-1,\alpha}+2 i  H_{n-1,j-1,\alpha-1}).\]

\end{proposition}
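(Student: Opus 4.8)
The plan is to track the coefficient of the top-derivative monomial $a(t)^{n-1}a^{(n)}(t)$ through the recursion for the heat coefficients, reducing the problem to a scalar recursion. First I would isolate, inside each polynomial $P_{n,j,\alpha}$ (in the notation of Proposition \ref{monomialformofpnja}), the term of ``weight $l=n$'' that carries the single highest derivative $a^{(n)}(t)$, i.e.\ the coefficient of $a(t)^{n-1}a^{(n)}(t)$; call this matrix-valued coefficient $H_{n,j,\alpha}$ (it depends only on $\eta$ a priori, but I expect it to be a scalar). The key observation is that in the recursive formula \eqref{recforenja} for $e_{n,j,\alpha}$, a new highest derivative $a^{(n)}$ can only be produced from an $a^{(n-1)}$ via the single term $2a(t)\frac{\partial}{\partial t}e_{n-1,j-1,\alpha-{\bf e}_1}$ (differentiating the $a^{(n-1)}$ factor) and from an $a^{(n-2)}$ via the term $a(t)^2\frac{\partial^2}{\partial t^2}e_{n-2,j-1,\alpha}$ (differentiating an $a^{(n-2)}$ twice). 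Every other term on the right-hand side either lowers the index $n$ by too little relative to the derivative order it can contribute, multiplies by $a'$, $\cot\eta$, etc.\ without raising the top derivative, or contributes a monomial of strictly smaller weight. Extracting the coefficient of $a(t)^{n-1}a^{(n)}(t)$ from these two surviving terms, and dividing by the overall factor $(j-1)a(t)$ resp.\ $(j-1)a(t)^2$, yields exactly
\[
H_{n,j,\alpha}=\frac{1}{j-1}\bigl(H_{n-2,j-1,\alpha}+2i\,H_{n-1,j-1,\alpha-1}\bigr),
\]
where the $2i$ comes from $\frac{\partial}{\partial t}$ acting through $i\xi_1$-type factors recorded in the symbol $p_1$ (equivalently, from how $\partial_t$ enters the relevant $e$-terms; the precise constant is pinned down by the initial data).

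Next I would compute the initial cases. From the listed values $e_{0,1,0,0,0,0}=1$ and $e_{1,2,1,0,0,0}=\frac{3ia'(t)}{a(t)}$, $e_{1,3,1,2,0,0}=\frac{2ia'(t)}{a(t)}$, one reads off the $n=1$ seeds $H_{1,2,1}=H_{1,3,1}=\frac{3i}{2\sqrt\pi}$ (the $\frac{1}{2\sqrt\pi}$ being $\Gamma(\tfrac12)/(2\pi)$-type normalization coming from the $c_\alpha$ and the $\frac{1}{16\pi^4}\cdot 2\pi^2$ prefactor in $a_{2n}$). Then one step of the recursion gives the $n=2$ values $H_{2,4,2}$, $H_{2,3,0}$, $H_{2,2,0}$, $H_{2,3,2}$ as stated; here one must also include the contribution of $p_0$ (the term $-p_0 r_{n-2,j-1,\alpha}$), since $p_0$ contains $a''(t)$, so it injects a highest derivative at level $n=2$ — this is the origin of the $H_{2,\cdot,0}$ seeds that are not obtained from $n=1$ data by the homogeneous recursion. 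After the seeds are fixed, $H_{n,j,\alpha}$ is determined for all $n$, and since each step only shifts $\alpha$ by ${\bf e}_1$ (in the $\alpha-1$ slot, meaning the first component), and the seeds are scalars, all $H_{n,j,\alpha}$ are scalars indexed effectively by $n$, $j$, and a single integer $\alpha$; only even $\alpha$ survive in $a_{2n}$ because $c_\alpha$ vanishes for odd components.

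Finally I would assemble $h_n$ by summing the surviving $H_{n,j,\alpha}$ against the Gaussian-integral weights: from \eqref{en}–\eqref{ensum}, the coefficient of $r_{n,j,\alpha}$ in $e_n$ carries $\frac{c_\alpha}{(j-1)!}$, and after the trace and the prefactor the $\xi$-integral over the relevant component contributes $\Gamma\!\bigl(\tfrac{2k+1}{2}\bigr)$ when $\alpha=2k$ in that slot. Collecting, $h_n=\sum_{j,k}\Gamma\!\bigl(\tfrac{2k+1}{2}\bigr)H_{n,j,2k}$ with $j$ ranging over $[n/2]+1\le j\le 2n+1$ and $k$ over $0\le k\le j-n/2-1$ (the upper bound on $k$ being forced by the constraint $2j-2-|\alpha|=n$ together with nonnegativity), which is the claimed closed form. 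The main obstacle I anticipate is the bookkeeping needed to be certain that \emph{no other} term in \eqref{recforenja} can feed into the top-derivative monomial — in particular one must check that terms like $a(t)a''(t)\,e_{n-2,\cdots}$, $a'(t)^2\,e_{n-2,\cdots}$, and the various $\partial_\eta$ terms all produce only monomials of weight $<n$ or with derivative order $<n$, so that the scalar recursion is genuinely closed; this requires a careful weight/degree argument using Proposition \ref{monomialformofpnja} (the fact that each monomial in $P_{n,j,\alpha}$ has $\sum k_j=\sum j k_j=l\le n$ forces the unique-top-derivative term to have $l=n$ and all other $k_j=0$ except $k_0=n-1$, $k_n=1$), after which identifying the two surviving contributions and their constants is routine.
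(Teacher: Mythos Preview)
Your proposal is correct and follows essentially the same approach as the paper's own proof: isolate the top-derivative monomial $a(t)^{n-1}a^{(n)}(t)$ inside each $P_{n,j,\alpha}$, observe via a weight/degree argument that only the two $t$-derivative terms $a(t)^2\,\partial_t^2 P_{n-2,j-1,\alpha}$ and $2ia(t)\,\partial_t P_{n-1,j-1,\alpha-{\bf e}_1}$ in \eqref{recforenja} can produce it, derive the scalar recursion, seed it at $n=1,2$ (with the $n=2$ seeds coming from the $a''$ in $p_0$, as you note), and then sum against the $c_\alpha$-weights. The paper carries this out slightly more tersely and passes to the single-index $H_{n,j,\alpha_1}$ by summing over $\alpha_2,\alpha_3,\alpha_4$ with the trace and the $\eta$-integral absorbed into the definition, but the logic is identical to yours.
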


\begin{proof}
It follows from Proposition \ref{monomialformofpnja} that 
the highest derivative of $a(t)$  in $a_n$  appears in the term $a(t)^{n-1}a^{(n)}(t)$. By a careful analysis 
of  the equation \eqref{recforenja} we find that  only the terms
\[
\frac{1}{j-1}\Big (a(t)^2 \frac{\partial ^2}{\partial t^2}P_{n-2,j-1,\alpha}+
2 i a(t) \frac{\partial}{\partial t} P_{n-1,j-1,\alpha-{\bf e}_1} \Big)
\]
contribute to its recursive formula. Denoting the corresponding monomial in 
$P_{n,j,\alpha}$ by $H_{n,j,\alpha}a(t)^{n-1}a^{(n)}(t)$  and substituting it into 
the above formula we obtain the equation 
\[
H_{n,j,\alpha}=\frac{1}{j-1}(H_{n-2,j-1,\alpha}+2 i  H_{n-1,j-1,\alpha-{\bf e}_1}),
\]
for any $n>2$. Denoting 
$$H_{n,j,\alpha_1}=\sum \prod_{k=2}^4 \Gamma\left(\frac{\alpha_k+1}{2}\right)\frac{(-1)^{\alpha_k}+1}{2}\,{\rm tr}\left(\frac{ 1}{(2 \pi )^2}\int _0^{\pi/2 }H_{n,j,\alpha_1,\alpha_2,\alpha_3,\alpha_4}d\eta \right),$$
the recursive formula converts to 
$$H_{n,j,\alpha}=\frac{1}{j-1}(H_{n-2,j-1,\alpha}+2 i  H_{n-1,j-1,\alpha-1}).$$
Thus,  the coefficient of $a(t)^{n-1}a^{(n)}(t)$ in $a_n$ is given by the above expression. 
\end{proof}

Using the above proposition we find that:  
\begin{eqnarray}
&& h_2 = \frac{1}{4}, 
\qquad  h_4=  \frac{1}{40}, 
\qquad h_6= \frac{1}{560}, 
\qquad h_8= \frac{1}{10080}, 
\qquad h_{10}=\frac{1}{221760}, \nonumber \\
&&  h_{12}=\frac{1}{5765760}, 
\qquad h_{14} = \frac{1}{172972800},  
\qquad  h_{16} = \frac{1}{5881075200}, \nonumber \\
&& h_{18}=\frac{1}{223480857600}, 
\qquad h_{20}= \frac{1}{9386196019200}.  \nonumber
\end{eqnarray}

\section{Conclusions} \label{Conclusions}

Pseudodifferential calculus is an effective tool for applying heat kernel methods to compute the 
terms in the expansion of a spectral action. We have used  this technique to 
derive  the terms up to $a_{12}$ in the expansion of the spectral action for 
the Robertson-Walker metric on a 4-dimensional geometry with a general cosmic scale factor $a(t)$. 
Performing the computations in Hopf coordinates, which reflects the symmetry of the space more conveniently   
at least from a technical point of view, we proved the independence of the integral kernel of the corresponding heat 
operator from three coordiantes of the space.  This allowed us to furnish the proof of the conjecture of Chamseddine 
and Connes on rationality of the coefficients of the polynomials in  $a(t)$ and its derivatives  that describe the general terms $a_{2n}$ in the expansion.

The terms up to $a_{10}$ were previously computed in \cite{ChaConRW} using their direct method, where the terms up to 
$a_6$ were checked against Gilkey's universal formulas \cite{GilBook1, GilBook2}. The outcome of our computations confirms the previously 
computed terms. Thus, we have formed a check on the terms $a_8$ and $a_{10}$. In order to confirm our calculation   
for the term $a_{12}$, we have performed a completely different computation in spherical coordinates and checked 
its agreement with our calculation in Hopf coordinates. It is worth emphasizing that  the high complexity of the computations, which is overcome by 
computer assistance, raises the need to derive the expressions at least in two different ways to ensure their validity.

We have found a formula for the coefficient of the term with the highest derivative of $a(t)$ in $a_{2n}$ for all $n$ 
and make the following observation.  The polynomials $Q_{2n}$ in 
$a_{2n}=Q_{2n}\left(a(t),a'(t),\dots,a^{(2n)}(t)\right)/a(t)^{2n-3}$  are of the following form up to $Q_{12}$: 
\[
Q_{2n}(x_0, x_1, \dots, x_{2n}) = \sum  c_k\, x_0^{k_0} x_1^{k_1} \cdots x_{2n}^{k_{2n}}, \qquad c_k \neq0,
\]
where the summation is over all tuples of non-negative integers $k=(k_0,k_1, \dots, k_{2n})$ such that either 
$ \sum  k_j   =  2n$  while $  \sum j k_j  =   2n$, or  $ \sum k_j   =   2n-2$   while   $ \sum j k_j   =    2n-2$.  This provides enough evidence and hope to shed more light on general structure of the terms $a_{2n}$ by further investigations, which are under way.

\section*{Acknowledgments}

We are indebted to Alain Connes for helpful discussions and encouragements on the present topic. 
F.F. thanks the Institut des Hautes \'Etudes Scientifiques (I.H.E.S.) and its IT department, in particular Francois Bachelier, for their support and the excellent environment and facilities during his visit in the Fall of 2013.

\end{document}